\documentclass[12pt]{article}
\usepackage[margin=0.9in]{geometry}
\usepackage[utf8]{inputenc}
\usepackage{bm}
\usepackage{bbm}
\usepackage{stmaryrd}
\usepackage{amsmath}
\usepackage{amsthm}
\usepackage{amssymb}
\usepackage[dvipsnames]{xcolor}
\usepackage{graphicx}
\usepackage{ulem}
\usepackage{url}
\usepackage{comment}
\usepackage{braket}

\newcommand{\s}{{\rm S}}
\newcommand{\rr}{{\rm R}}

\newcommand{\tr}{\mathrm{Tr}}

\newcommand{\h}{{\mathcal H}}

\newcommand{\ve}{\varepsilon}

\newcommand{\bbbone}{\mathchoice {\rm 1\mskip-4mu l} {\rm 1\mskip-4mu l}
{\rm 1\mskip-4.5mu l} {\rm 1\mskip-5mu l}}
\newtheorem{thm}{Theorem}
\newtheorem{prop}{Proposition}
\newtheorem*{prop*}{Proposition}
\newtheorem{lem}{Lemma}
\newtheorem{cor}{Corollary}

\renewcommand{\a}{{\rm A}}
\renewcommand{\b}{{\rm B}}

\newcommand{\vve}{{\vec\ve}}
\newcommand{\vvp}{{\vec p}}

\newcommand{\bop}{\beta_{\rm c}}

\usepackage{authblk}
\title{The contact temperature of 
arbitrary quantum states}

\author[1]{Alain Joye\footnote{alain.joye@univ-grenoble-alpes.fr}}
\author[2]{Marco Merkli\footnote{merkli@mun.ca}}
\affil[1]{Univ.~Grenoble Alpes

CNRS, Institut Fourier

F-38000 Grenoble France
\medskip
}
\affil[2]{Department of Mathematics and Statistics

Memorial University of Newfoundland

St.~John’s, NL, Canada A1C 5S7
}

\begin{document}

\maketitle

\begin{abstract} 
An intuitive scheme to assign a temperature to an arbitrary state of a quantum system is to investigate the heat flow resulting from the coupling to a thermometer. We introduce a simple model of a universal thermometer with the following property. When it is prepared in a Gibbs equilibrium state at inverse temperature $\beta\in\mathbb R$ and brought into thermal contact with a system in any state, the heat flow between the system and thermometer  vanishes for a unique value of $\beta$.  We call this value the {\it contact temperature} $\bop\in\mathbb R$ of the system state. The thermometer is universal in that it yields a unique contact temperature for arbitrary states of  finite dimensional quantum systems.

\end{abstract}

\section{Introduction}\label{sec:intro}

The equilibrium state of a quantum system  whose dynamics is generated by a Hamiltonian $H$ is given by the Gibbs density matrix 
$$
\gamma_\beta = \frac{e^{-\beta H}}{{\rm Tr}\, e^{-\beta H}}\, ,
$$
where $\beta\in\mathbb R$ is the (inverse) equilibrium temperature. The Gibbs density matrix is obtained as the maximizer of the entropy under the constraint of a fixed mean energy. Different approaches have been proposed to extend the notion of temperature to general quantum states $\rho$, see the review \cite{Casas}.  Among them is the effective temperature $\beta^*$ of $\rho$ defined as the Gibbs equilibrium temperature resulting in the same average energy, determined by ${\rm Tr}(H\rho ) = {\rm Tr}(H \gamma_{\beta^*})$. The effective $\beta^*$ is the unique minimizer of the function $\beta\mapsto S(\rho | \gamma_\beta)$, the von Neumann relative entropy of $\rho$ with respect to the Gibbs state. An approach parallel to standard (equilibrium) thermodynamics is developed in \cite{Alipour}, where a non-equilibrium temperature is constructed as the partial derivative of the von Neumann entropy with respect to an internal energy. In the current work we follow a more concrete, operational strategy for defining the temperature, in line with some previous work \cite{Muschik}. There a contact temperature is defined for classical thermodynamical systems. The idea is to couple the system in a non equilibrium state to a thermostat in equilibrium and to analyze for which thermostat temperature the total heat exchange through the contact boundary vanishes. In the current paper we develop this idea for quantum systems. We assign a temperature to arbitrary states $\rho$ of any (finite dimensional) quantum system by coupling it to a thermometer or thermostat, an (infinite dimensional) quantum system in thermal equilibrium.  The contact with the thermometer is mediated by an interaction producing equal transition probabilities between all system+thermometer states at a fixed total energy. We show that this results in a single value of the thermometer temperature before the coupling, for which the heat flow vanishes. This value is the effective contact temperature of the system. The contact temperature can be positive or negative, as is the case for virtual temperatures in thermal machines \cite{BLPS} and effective temperatures of a related approach \cite{LPB}, where the authors introduce two temperatures $\beta_c$ and $\beta_h$ (cold and hot) associated to a quantum state $\rho$, defined by the ability of the system in the state $\rho$ to either furnish or extract energy when coupled to reservoirs.

\subsection{Outline of approach and of main results}

We now explain our main results in a somewhat  informal way and we present the rigorous aspects in the coming sections. The system, called $\a$, is a $d$-level quantum system with a Hamiltonian $H_\a$ having non-degenerate energies $\ve_1 <\cdots <\ve_d$. The thermometer, or system $\b$, has Hamiltonian $H_\b$ with eigenvalues $\eta_k=k/N$, $k=0,1,\ldots$ and $N$ large. $\a$ and $\b$ are coupled by an energy exchange interaction described by a {\it thermal process} $U$, a unitary acting on the total complex $\a+\b$,  having the commutation property
$$
[U,H_\a+H_\b] =0.
$$
As a consequence of the commutation property, a change in the energy of $\a$ is entirely due to a change in the energy of $\b$ and vice versa. Given a state (density matrix) $\rho$ of $\a$, our main quantity of interest is the change in the energy of $\a$ due to the contact with $\b$ in the Gibbs state
$$
\tau_\beta=\frac{e^{-\beta H_\b}}{{\rm Tr}e^{-\beta H_\b}},
$$
which is given by
$$
Q_\a(\beta) = \tr_{\a\b}{\big(U(\rho\otimes\tau_{\beta}) U^* H_\a\big)} - \tr_\a{(\rho H_\a)}.
$$
The state $\rho\otimes\tau_\beta$ of A+B before the contact is factorized, and so the energy change may be interpreted as heat. We call $Q_\a(\beta)$ the energy (or heat) flow resulting from the contact with the thermometer. The thermometer Gibbs state $\tau_\beta$ is defined only for $\beta>0$ as ${\rm Tr}e^{-\beta H_\b}=\infty$ for $\beta\le 0$, because $\b$ has infinitely many energy levels. Thus in the above definition of $Q_\a(\beta)$ we consider $\beta>0$. As expected, the flow vanishes if $\a$ is in the Gibbs state at the same temperature as $\b$: for $\rho=\gamma_\beta$ we have $Q_\a(\beta)=0$, which follows from the commutation property $[U,H_\a+H_\b]=0$.

In order to guarantee a good thermal contact between $\a$ and $\b$, the two systems should be able to exchange energy. In the theory of open quantum systems, particularly in weak coupling regimes, well-coupledness is often stated as a `Fermi Golden Rule' condition. While the interaction in the present work is not scaled by any coupling constant, our well-coupledness condition shares the same two central physical aspects with the Fermi Golden Rule. Namely, the system and thermometer (reservoir) should have compatible Bohr energies (energy differences) and the interaction should allow energy transitions in tandem, within the system and the reservoir. In the current setup the total energy is conserved by $U$, so that energetic transitions in $\a$ must be accompanied by equal and opposite energetic transitions in $\b$, that is, the Bohr energies of $\a$ should also be Bohr energies of $\b$. Given that the thermometer spectrum is $k/N$, $k=0,1,\ldots$ this leads us to the following assumption.
\smallskip

(i) For all $\ell,m$,  the difference $|\ve_\ell-\ve_m|$ must be a multiple of $1/N$. 
\smallskip

\noindent
In order to allow for energy exchange transitions, the unitary $U$ must have non-vanishing matrix elements between energy eigenstates of $H_\a+H_\b$.  We consider a thermal contact for which all equal energy transitions are {\it equiprobable}. Namely, we assume that
\smallskip

(ii) The transition probabilities induced by $U$, between any two equal energy eigenstates of the uncoupled Hamiltonian $H_\a+H_\b$, are the same. 
\smallskip

\noindent 
The equal probability property (ii) is an expression of {\it randomness}, which we think is natural for complex or `generic' systems. Namely, we show in Section \ref{sec:model} that for the purpose of the  heat flow, the property (ii) is equivalent to taking the unitary $U$ as a random operator, drawn according to the unique, bi-invariant probability measure on the unitary group (the Haar measure). 
Besides its spectrum, the manner how the thermometer couples to systems $\s$ is an equally important part of its definition. In our model this manner is encoded in the operator $U$ which depends on the spectrum of $\s$ due to the condition (ii). Part of the definition of the thermometer, therefore, is that it will couple to any system in a fashion which produces equal probability energy transitions, irrespective of the detailed nature of $\s$.

Even a small change in the system energies can alter the Bohr resonance condition (i) and  simultaneously the structure of $U$. This may lead to a decreased value of the mathematical expression of the heat flow. For example, take energies $\ve_\ell$ such that (i) is satisfied with a resulting $Q_\a>0$. An infinitesimal change of the energies into new ones with the property that for all $\ell\neq m$, the difference  $|\ve'_\ell-\ve'_m|$ is not an exact multiple of $1/N$, implies that $U'$ satisfying condition (ii) with the new energies, reduces to (a direct sum of) pure phases and consequently $Q'_\a=0$. The flow is thus not continuous in the system energies, when one only considers (ii) without (i).  Physically, however, a drastic change in the heat flow under infinitesimal alteration of the energy spectrum of $\s$ is unrealistic. The conditions (i) and (ii) work together on a physical level. Together they  eliminate this unnatural instability. Still, given an arbitrary energy spectrum $\{\ve_\ell\}$ of $\s$, the rationality condition (i) will not be verified in general. However, for $N$ large enough, the condition is satisfied to an arbitrary degree of precision. It turns out, as we show below, that the heat flow $Q_\a(\beta)$ {\it does not depend on $N$}. Therefore the condition (i) is not a physically relevant restriction on $\s$.

In conlcusion, jointly the assumptions (i) and (ii) give a universal thermometer model, yielding a unique contact temperature associated to any state $\rho$ of $\s$, regardless of the physical nature of $\s$.

{\bf Thermodynamic results.} Our first result is Theorem \ref{Qunbounded}, in which we obtain the following explicit expression for the heat flow, 
$$
Q_\a(\beta)=\sum_{1\leq j < k \leq d}\frac{e^{-\beta (\ve_k -\ve_j)}}{k(k-1)}p_j \sum_{l=1}^{k-1}(\ve_k-\ve_l)- \sum_{k=2}^{d}\frac{p_k}{k}\sum_{l=1}^{k-1}(\ve_k-\ve_l),
$$
where the $p_j$ are the ($H_\a$ energy) populations of the state $\rho$. The explicit form of the heat flow allows for a direct discussion of its properties (see Section \ref{sec:model}). Our main question now is whether there is a unique root (zero) $\bop$ of the function $Q_\a(\beta)$. We show that (unless $\rho$ is the ground state or the top excited state of $H_\a$), $\beta\mapsto Q(\beta)$ is continuous and strictly decreasing in $\beta$ (for all $\beta\in\mathbb R$) and that $Q(\beta)<0$ for $\beta\rightarrow\infty$ while
$$
Q_\a(0) :=\lim_{\beta\rightarrow 0_+} Q_\a(\beta) =  \sum_{j=1}^d \ve_j(\frac1d -p_j).
$$
Consequently, if $Q_\a(0)\ge 0$, then there exists a unique $\bop\ge 0$ which makes the flow vanish. The root $\bop$ is called the contact temperature of $\rho$. The existence and uniqueness of $\bop\ge 0$ is the content of our Theorem \ref{thm2}.

As is readily visible from the explicit expression above, $Q_\a(0)$ is negative for very legitimate states $\rho$ (populations $p_j$). In this case the monotonicity implies that the function $Q_\a(\beta)$ does not have any non-negative root. How should we assign a temperature to states for which $Q_\a(0)<0$ ?  The physical meaning of this inequality is that the system $\a$ in the state $\rho$ will transfer heat to $\b$ even if $\b$ is infinitely hot, $\beta=0_+$ (measured according to the Gibbs equilibrium temperature $T=1/\beta$). In order to define the contact temperature in the case when $Q_\a(0)<0$, the first impulse is to use the above explicit formula for $Q_\a(\beta)$ for all values of $\beta\in\mathbb R$ | even though it was derived for $\beta>0$. The monotonicity and the boundary values $\lim_{\beta\rightarrow-\infty}Q_\a(\beta)=\infty$ and $\lim_{\beta\rightarrow\-\infty}Q_\a(\beta)<0$ would then guarantee the existence of a single root in $(-\infty,0)$ for $Q_\a(0)<0$. However, this extension of $Q_\a(\beta)$ to negative $\beta$ is not physical. Indeed, $Q_\a(\beta)$ is supposed to represent the energy difference of a finite system before and after the contact with the thermometer; so $Q_\a(\beta)$ must be bounded above by $\ve_d-\ve_1$ and cannot have the above stated boundary condition at $\beta=-\infty$. The origin of this inconsistency is that the Gibbs state $\tau_\beta$ is ill defined for $\beta\le 0$, and therefore so is the defining expression of $Q_\a(\beta)$.

\begin{figure}[h!]
\centering
\includegraphics[width=.9\linewidth]{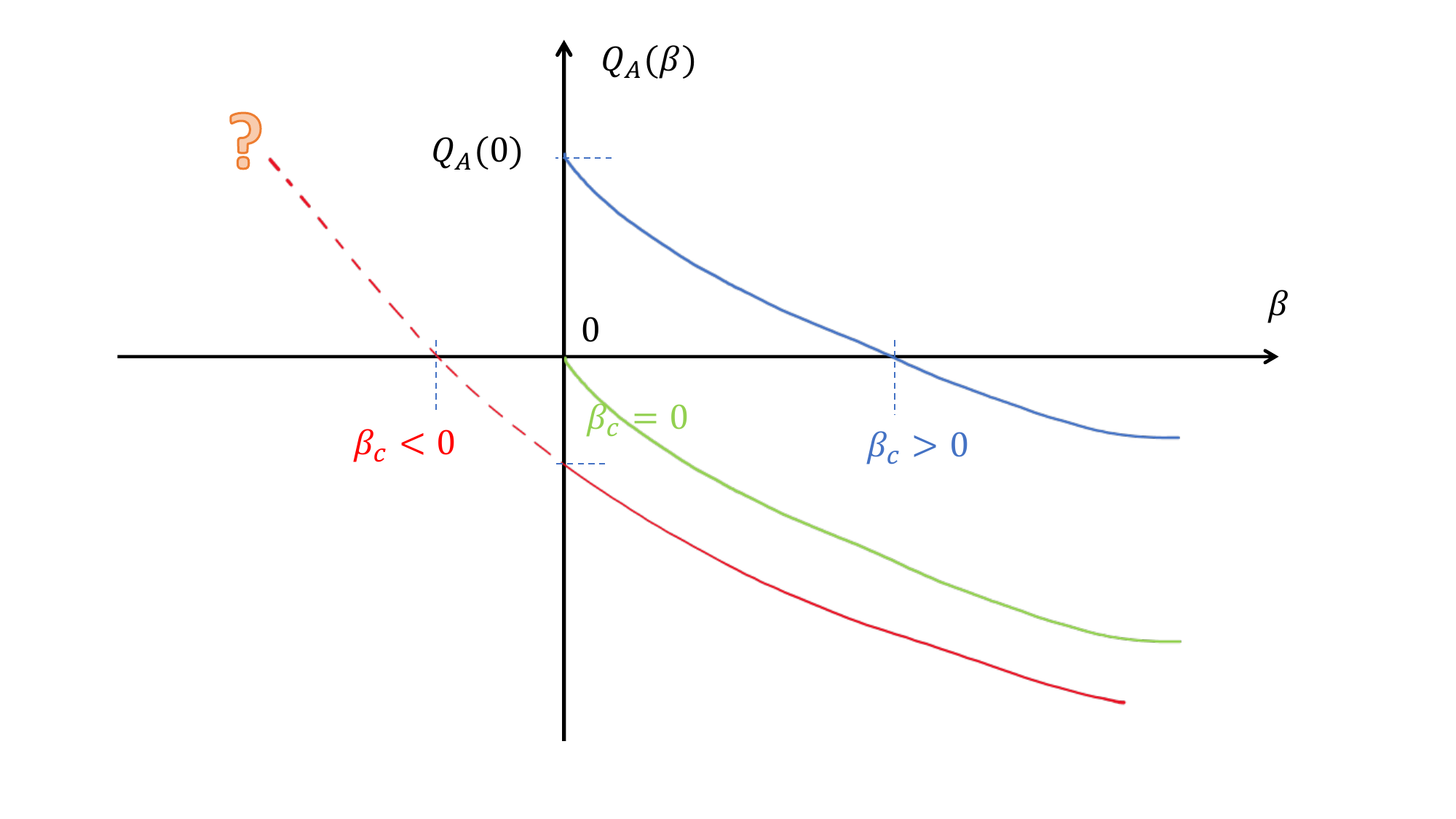}
\caption{The energy flow $Q_\a(\beta)$ (see also the definition \eqref{EvarA}), is a strictly decreasing, convex function of $\beta>0$ and has a well defined limit value $Q_\a(0)$ as $\beta\rightarrow 0_+$. If $Q_\a(0)>0$ then there is a unique root $\bop>0$ (blue curve). If $Q_\a(0)=0$ then the unique root is $\bop=0$ (green curve). If $Q_\a(0)<0$ then there is no root in $[0,\infty)$ (red curve). However, by extending the function $Q_\a(\beta)$ to a function $\bar Q(\beta)$ also defined for $\beta<0$, we have a unique root $\bop<0$ of the extended function $\bar Q(\beta)$. This is the (negative) contact temperature associated to the system state. The question mark indicates that one has to decide what the extension should be. We argue it should be the flow resulting from the coupling to a thermometer having finitely many levels, which is in the Gibbs equilibrium state with $\beta<0$, when the number of levels is taken to infinity.}
   \label{Fig.1}
\end{figure}

The physically correct approach is to consider first a finite-level thermometer for which the Gibbs state is defined for all $\beta\in\mathbb R$, also for $\beta<0$, and then take the limit of infinitely many levels.  The  {\it finite-level} thermometer is defined by the Hamiltonian $H_{\b,K}$ acting on $\mathbb C^{K+1}$ with eigenvalues $\eta_k=k/N$, $k=0,1,\ldots,K$. The state $\tau_{\beta,K}=e^{-\beta H_{\b,K}}/{\tr}e^{-\beta H_{\b,K}}$ is a well defined density matrix for all $\beta\in\mathbb R$ and determines the heat flow
$$
Q_{\a,K}(\beta) = \tr{\big(U(\rho\otimes\tau_{\beta,K}) U^* H_\a\big)} - \tr{(\rho H_\a)},
$$
where the trace is over $\mathbb C^d\otimes \mathbb C^{K+1}$ (system plus thermometer Hilbert space) and $U$ depends on $K$, as it implements equiprobable energy transitions of $H_\a+H_{\b,K}$.  We show in Theorem \ref{thm:QbetainR} that the $\infty$-level limit
$$
\bar Q_\a(\beta) =\lim_{K\rightarrow\infty} Q_{\a,K}(\beta)
$$
exists and is given by 
$$
\bar Q_\a(\beta) = 
\left\{
\begin{array}{ll}
Q^+_\a(\beta):=\displaystyle\sum_{1\leq j < k \leq d}\frac{e^{-\beta (\ve_k -\ve_j)}p_j-p_k}{k(k-1)} \sum_{l=1}^{k-1}(\ve_k-\ve_l), & \quad \beta>0,\\ [15pt]
Q^-_\a(\beta) := \displaystyle\sum_{1\le k< j\le d} \frac{e^{-\beta(\ve_k-\ve_j)}p_j - p_k}{(d-k+1)(d-k)} \sum_{l=k+1}^d(\ve_k-\ve_l), & \quad \beta\leq 0.
\end{array}
\right.
$$
This is a continuous extension of $Q_\a(\beta)$, originally defined for $\beta>0$, now extended to $\beta\in\mathbb R$. The extended function is strictly decreasing (unless $\rho$ is the ground or top excited state of $H_\a$) and satisfies $|\bar Q_\a(\beta)|\leq \ve_d-\ve_1$ for all $\beta\in \mathbb R$. It has a unique root $\bop\in\mathbb R$ which coincides with the previous non-negative root of $Q_\a$ in case $Q_\a(0)\ge 0$, and it also yields a unique negative root which is defined to be the contact temperature of states $\rho$ for which $Q_\a(0)<0$. 
\medskip

{\bf Structural results.} The heat flow $Q_\a(\beta)$ is determined by the reduction to the system part alone, of a thermal process ($U$) acting on the system+thermometer complex. We study this structure in a more general setting in Sections  \ref{sec:eflow} and \ref{secRedQChannels}. In the latter section, we consider a bipartite system with Hilbert space $\h\otimes\h_\rr$  (`system plus reservoir') with  $\dim\h=d<\infty$, $\dim\h_\rr\le \infty$, having a non-interacting Hamiltonian $H+H_\rr$. We take a unitary map $V$ on $\h\otimes\h_\rr$ such that $[H+H_\rr, V]=0$ (thermal process). We define the completely positive and trace preserving (CPTP) map acting on operators $X\in\mathcal B(\h)$ on the system, by 
$$
\Phi_\beta(X) = {\rm Tr}_\rr\big(V(X\otimes\tau_\beta)V^*\big),
$$
where $\tau_\beta\propto e^{-\beta H_\rr}$ is the Gibbs state of $\rr$ (and $\beta>0$). Suppose that the eigenvalues $\ve_j$ of $H$ are simple. Due to the conservation of total energy by $V$, the  thermal channel $\Phi_\beta$ leaves the subspace $\mathcal D$ of diagonal operators (in the eigenbasis of $H$) invariant, and it leaves the off-diagonal operators $\mathcal O$ invariant as well (see Lemma \ref{phiprop}). $\Phi_\beta$ has thus the block decomposition 
$$
\Phi_\beta = \Phi_\beta|_{\mathcal D}\oplus\Phi_\beta|_{\mathcal O}.
$$
Denote by $\psi_j$ the eigenbasis of the system Hamiltonian $H$. The probability of transition $\psi_{j'}\rightarrow \psi_j$ in the system, due to the contact with $\rr$ (mediated by $V$), is given by
$$
G_{jj'} (\beta)= {\rm Tr}\Big(|\psi_j\rangle\langle\psi_j| \Phi_\beta\big(|\psi_{j'}\rangle\langle \psi_{j'}|\big)\Big).
$$
This defines a $d\times d$ matrix  $G(\beta)=(G_{jj'}(\beta))$ with $G_{jj'}(\beta)\ge 0$ and $\sum_{j=1}^d G_{jj'}(\beta)=1$ for all $j'$, which represents the restriction $\Phi_\beta|_{\mathcal D}$. So $G(\beta)$ is a stochastic matrix, called the stochastic matrix associated to the quantum channel $\Phi_\beta$. The vector $v_\beta\in{\mathbb R}^d$ with components $\propto e^{-\beta\ve_j}$, $j=1,\ldots,d$ represents the system equilibrium (Gibbs) density matrix $\propto e^{-\beta H}$. It is a fixed point of the stochastic matrix, $G(\beta)v_\beta=v_\beta$. It follows from a Perron-Frobenius argument (see Corollary \ref{cor1}) that if $G_{jj'}(\beta)>0$ for all $j,j'$ and if all the Bohr energies $\ve_j-\ve_k$ of $H$ are distinct ($j\neq k$), then the eigenvalue $1$ of $\Phi_\beta$ is simple (eigenvector $v_\beta$) and all other eigenvalues have modulus strictly less than $1$. These general properties hold in particular for the specific model of the thermometer coupled to the system.

We show in Proposition \ref{gjjp} that for the system-thermometer model, the heat flow and the stochastic matrix are related for  $\beta>0$ as,
$$
Q_\a(\beta) = \vve \cdot G(\beta) \vvp -\vve\cdot\vvp, 
$$
where $\vve=(\ve_1,\ldots,\ve_{d})^T$ and $\vvp=(p_1,\ldots,p_d)^T$ are the column vectors in $\mathbb R^{d}$ whose entries are the energies of $H_\a$ and populations of $\rho$, and where $G(\beta)$ is the $d\times d$ stochastic matrix associated to $\Phi_\beta$. Moreover, the matrix elements of $G(\beta)$ are given by
$$
G_{j j'}(\beta) =\sum_{s=\max\{j,j'\}}^{d}\frac{1}{s}\big(e^{-\beta (\ve_s-\ve_{j'})}-e^{-\beta (\ve_{s+1}-\ve_{j'})}\big)\ \ \  \text{with \ }\ e^{-\beta \ve_{d+1}}=0,
$$
from which we see directly that $G_{jj'}(\beta)>0$, and the eigenvalues of $G(\beta)$ are strictly positive. The general theory of reduced thermal channels, as outlined above, then implies the following result (Theorem \ref{applithm}):  If all nonzero Bohr energies $\ve_j-\ve_k$ of $H_\a$ are distinct, then $\Phi_\beta|_{\mathcal D}$ has a simple eigenvalue $1$ with associated spectral projection $|v_\beta\rangle\langle v_0|$  ($v_0$ is the vector in $\mathbb C^d$ with elements all equal to $1$) and all other eigenvalues of $\Phi_\beta|_{\mathcal D}$, as well as all eigenvalues of $\Phi_\beta|_{\mathcal O}$, have modulus strictly less than $1$.  The said spectral property implies that the repeated contact of the system with fresh thermometers at temperature $\beta$, resulting in the repeated application of $\Phi_\beta$ to any initial state $\rho$, will drive the system to the Gibbs state $\gamma_\beta=e^{-\beta H}/{\rm Tr}e^{-\beta H}$ at the temperature of the thermometers. This in turn means that the contact temperature $\bop^{(k)}$ of $\rho^{(k)}=\Phi_\beta^k(\rho)$ satisfies $\lim_{k\rightarrow\infty}\bop^{(k)}=\beta$. We discuss the details of this dynamical process of approach to equilibrium in Section \ref{sec:appreq}.

\section{Energy flow between system and thermometer}
\label{sec:model}

Let $\a$ be a quantum system defined on the Hilbert space $\h_\a=\mathbb C^{d}$ with Hamiltonian 
\begin{equation}
H_\a = \sum_{j=1}^{d} \ve_j |\psi_j^\a\rangle\langle\psi_j^\a|,
\end{equation}
where the $\ve_j$ and $\psi_j^\a$ are the energies and energy eigenstates. We assume
\begin{itemize}
\item[{\bf (ND)}]
The energies are ordered, $\ve_1 < \ve_2<\cdots<\ve_{d}
$ and non-degenerate. 
\end{itemize}
States of the system $\a$ are density matrices $\rho$ on the system Hilbert space $\h_\a$ (non-negative operators of unit trace). 
The equilibrium state at temperature $T=1/\beta>0$ is given by the Gibbs density matrix
\begin{equation}
\label{AGibbs}
\gamma_{\beta}=\frac{e^{-\beta H_\a}}{Z_{\a}},
\end{equation}
where $Z_\a={\rm Tr}e^{-\beta H_\a}$ is the partition function. We call this $\beta$ (or $T=1/\beta$) the {\it equilibrium temperature}.

The Hamiltonian of the bath $\b$, also called the thermometer, is given by
\begin{equation}
\label{7}
H_\b = \sum_{k=0}^{\infty} \sum_{m=1}^{m_\b}\eta_k |\psi^\b_{k, m}\rangle\langle\psi^\b_{k,m}|=  \sum_{k=0}^{\infty} \eta_k P_k^\b,
\end{equation}
where the $\eta_k\ge0$ are the eigenvalues, all having degeneracy $1\leq m_\b <\infty$ and where $P_k^\b$ is the (rank $m_\b$) spectral projection associated to $\eta_k$. The vectors $\{\psi_{k,m}^\b\}$ form an orthonormal basis of the (infinite-dimensional) bath Hilbert space $\h_\b$, and the domain of $H_\b$ is the subspace  ${\rm dom}(H_\b):=\{\Psi\in \h_\b \, :\, \sum_{k=0}^{\infty} \sum_{m=1}^{m_\b}|\langle\psi_{k,m}^\b|\Psi\rangle|^2\eta_k^2<\infty\}$. We take the spectrum of the bath to be equally spaced, with small gaps,
\begin{itemize}
\item[{\bf (S)}]
$\eta_k=\frac{k}{N}$, $k\in \mathbb N\cup\{0\}$ for an integer $N$. Moreover, the system energies satisfy $\ve_s = \ve_1+\frac{l_s}{N}$ for some integers $l_s$, $s=1,\ldots,d$, with $0=l_1<l_2<\cdots<l_d$.
\end{itemize}

As $N(\ve_s-\ve_1)=l_s\ge s-1$ we must have $N\ge\max_{2\le s\le d}\frac{s-1}{\ve_s-\ve_1}$. In fact, we are interested in taking $N$ very large ($\rightarrow\infty$), so that $\rm B$ has the characteristics of a macroscopic (reservoir) system. The part of condition (S) saying that the Bohr energies $\ve_s-\ve_{s'}$ of $\a$ are multiples of the basic Bohr energy $1/N$ of the thermometer $\b$ guarantees that $\a$ and $\b$ are able to exchange energy which causes an energy flow. The thermal equilibrium state of $\b$ is given by
\begin{equation}
\label{thermoB}
\tau_{\beta} =\frac{e^{-\beta H_\b}}{Z_{\b}}, \ \ \beta>0,
\end{equation}
where $Z_{\b}=\tr (e^{-\beta H_\b})$ and $\beta$ is the reservoir equilibrium temperature.  The joint system $\a+\b$ is described by the Hilbert space $\h_{\a\b}=\h_\a\otimes\h_\b$. Its uncoupled Hamiltonian (dropping $\otimes \bbbone$ and $\bbbone\otimes$ from the notation) is
$$
H_{\a\b}= H_\a+H_\b,
$$
with domain $\h_\a\otimes {\rm dom}(H_\b)$. 
The interaction between $\a$ and $\b$ is given by a thermal process (thermal quantum operation) $U$, which is a unitary operator on $\h_{\a\b}$ satisfying
\begin{equation}
\label{m1}
[U, H_{\a\b}]=0.
\end{equation}
The energy flow $Q_\a(\beta)$ is defined as the difference of the energy of $\a$ after and before the interaction with $\b$,
\begin{equation}\label{EvarA}
Q_\a(\beta) = \tr{\big(U(\rho\otimes\tau_{\beta}) U^* H_\a\big)} - \tr{(\rho H_\a)}.
\end{equation}
The first trace is over $\h_{\a\b}$, the second one over $\h_\a$. The flow \eqref{EvarA} is invariant under translations of $H_\a$ by $\alpha \mathbb I$ for $\alpha\in \mathbb R$. Due to \eqref{m1} and the invariance of the trace under unitary conjugation, we have $\tr(U(\rho\otimes\tau_\beta)U^*(H_\a+H_\b))= \tr(\rho H_\a)+\tr(\tau_\beta H_\b)$. This shows that $U(\rho\otimes\tau_\beta)U^*H_\b$ has finite trace and that we have $Q_A(\beta)+Q_B(\beta)=0$, where $Q_B(\beta) = \tr{(U(\rho\otimes\tau_{\beta}) U^* H_\b)} - \tr{(\tau_{\beta} H_\b)}$. If $Q_\a(\beta)>0$ then $\a$ receives energy due to the interaction with $\b$. In the opposite case $\a$ gives energy to $\b$.

We introduce a genericity assumption on the thermal process $U$ which can be formulated in two equivalent ways: either in terms of uniformity or in terms of randomness. We show in Theorem \ref{Qunbounded} that they are equivalent, for the purpose of the energy flow. The genericity assumption in terms of uniformity is:
\begin{itemize}
\item[{\bf (G)}]
The transition probability induced by $U$, between any two equal energy eigenstates of the uncoupled Hamiltonian $H_{\a\b}$, are the same.
\end{itemize}
Denoting the distinct eigenvalues of $H_{\a\b} = H_\a+H_\b$ by $E_\ell$, $\ell\in \mathbb N$ we have the spectral representation
\begin{equation}
\label{m3'}
H_{\a\b} = \sum_{\ell=0}^\infty E_\ell P_\ell.
\end{equation}
The $P_\ell$ are the eigenprojections with $\dim P_\ell=g_\ell\ge 1$, where $g_\ell$ is the (finite) degeneracy of the eigenvalues $E_\ell$. The Hilbert space is partitioned into spectral subspaces 
$$
\h_{\a\b} = \bigoplus_{\ell=0}^\infty \h_{\a\b}^{(\ell)},\qquad \h_{\a\b}^{(\ell)} = {\rm Ran}P_\ell.
$$ 
Due to \eqref{m1} we have 
$U=\bigoplus_{\ell =0}^\infty U^{(\ell)}$ for unitaries $U^{(\ell)}$ on $\h_{\a\b}^{(\ell)}$.
The genericity assumption in terms of randomness is the following:
\begin{itemize}
\item[\bf (R)] The $U^{(\ell)}$ are independent random unitaries for different $\ell$. For any fixed $\ell$,  $U^{(\ell)}$ is distributed according to the Haar measure on the group of unitaries acting on $\h_{\a\b}^{(\ell)}$. 
\end{itemize}
The Haar measure is the unique probability measure on the unitary group that is invariant under left and right multiplication by any unitary matrix. 
\medskip
 
 Under the assumption {\bf (R)} the energy flow \eqref{EvarA}  is a random quantity $Q_A(\beta)=Q_\a(\beta,\omega)$, where $\omega$ indicates the randomness parameter. The next result gives an explicit form of the flow and shows that the average (expectation), $\mathbb E[Q_\a(\beta,\omega)]$ is {\it the same} as the $Q_\a(\beta)$, \eqref{EvarA} obtained under the genericity assumption {\bf (G)}.

\begin{thm}[Explicit form of the energy flow]
\label{Qunbounded}
Assume that either {\bf (G)} or {\bf (R)} hold. Let $Q_\a(\beta)$ denote the flow \eqref{EvarA} in the former case, and let $Q_\a(\beta)$ denote the expectation of the flow \eqref{EvarA} in the latter case. Then we have for any $\beta>0$,
\begin{align}\label{qabeta}
Q_\a(\beta)&=\sum_{1\leq j < k \leq d}\frac{e^{-\beta (\ve_k -\ve_j)}}{k(k-1)}p_j \sum_{l=1}^{k-1}(\ve_k-\ve_l)- \sum_{k=2}^{d}\frac{p_k}{k}\sum_{l=1}^{k-1}(\ve_k-\ve_l).
 \end{align}
Moreover, 
\begin{equation}
\label{qa0}
Q_\a(0) :=\lim_{\beta\rightarrow 0_+} Q_\a(\beta) =  \sum_{j=1}^d \ve_j(\frac1d -p_j).
\end{equation}
\end{thm}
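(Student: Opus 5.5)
The plan is to reduce everything to a classical computation on the energy eigenbasis $\{\psi^\a_j\otimes\psi^\b_{k,m}\}$ of $H_{\a\b}$. The eigenvalue of such a vector is $\ve_j+k/N=\ve_1+(l_j+k)/N$, so by {\bf (S)} the spectral subspace $\h^{(\ell)}_{\a\b}$ at total energy $\ve_1+L/N$ is spanned by the vectors with $l_j+k=L$. Its degeneracy is $g=m_\b\cdot\#\{s: l_s\le L\}$.

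\textbf{Step 1: only the populations of $\rho$ matter.} Write $\rho=\sum_{j,j'}\rho_{jj'}|\psi^\a_j\rangle\langle\psi^\a_{j'}|$ and note that $\tau_\beta$ is diagonal in the basis $\psi^\b_{k,m}$. A coherence term $|\psi^\a_j\otimes\psi^\b_{k,m}\rangle\langle\psi^\a_{j'}\otimes\psi^\b_{k,m}|$ with $j\ne j'$ connects two different spectral subspaces. Both $U$ and $H_\a$ are block diagonal with respect to these subspaces, since $H_\a$ commutes with $H_{\a\b}$. Hence such a term contributes zero to $\tr(U(\cdot)U^*H_\a)$. Therefore only $p_j=\rho_{jj}$ and the Gibbs weights $(1-e^{-\beta/N})e^{-\beta k/N}$ (the factor $1/m_\b$ included per state) enter.

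\textbf{Step 2: transition probabilities.} For a unitary $U^{(\ell)}$ on a $g$-dimensional space, the matrix $(|U_{ab}|^2)$ is doubly stochastic. Under {\bf (G)} all its entries are equal, hence equal to $1/g$. Under {\bf (R)}, invariance of the Haar measure gives $\mathbb E|U_{ab}|^2=1/g$. Since the flow is linear in the $|U_{ab}|^2$, both cases yield the same formula; for {\bf (R)}, exchanging expectation with the infinite $k$-sum is justified by dominated convergence, as everything is bounded by $\ve_d-\ve_1$ times summable weights.

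Starting from $\psi^\a_j\otimes\psi^\b_{k,m}$, the expected final $\a$-energy is the uniform average $\frac1n\sum_{s\le n}\ve_s$, where $n=n(j,k)=\max\{s:l_s\le l_j+k\}$. This uses that each admissible $\a$-level $s$ carries exactly $m_\b$ bath partners. Hence
\[
Q_\a(\beta)=\sum_j p_j\sum_{k\ge0}(1-e^{-\beta/N})e^{-\beta k/N}\,a^{(j)}_{n(j,k)},\qquad a^{(j)}_n=\frac1n\sum_{s\le n}\ve_s-\ve_j .
\]

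\textbf{Step 3: resummation, where $N$ disappears.} For fixed $j$, group the $k$'s with $n(j,k)=n\ge j$, i.e.\ $l_n-l_j\le k<l_{n+1}-l_j$ (with $l_{d+1}=\infty$). The geometric sum over this range gives exactly $e^{-\beta(\ve_n-\ve_j)}-e^{-\beta(\ve_{n+1}-\ve_j)}$, with $e^{-\beta\ve_{d+1}}=0$; this is the only place $N$ appeared. Abel summation then gives
\[
\sum_{n\ge j}\big(e^{-\beta(\ve_n-\ve_j)}-e^{-\beta(\ve_{n+1}-\ve_j)}\big)a^{(j)}_n = a^{(j)}_j+\sum_{n>j}e^{-\beta(\ve_n-\ve_j)}\big(a^{(j)}_n-a^{(j)}_{n-1}\big).
\]
A direct computation gives $a_n-a_{n-1}=\frac{1}{n(n-1)}\sum_{l<n}(\ve_n-\ve_l)$ and $a^{(j)}_j=-\frac1j\sum_{l<j}(\ve_j-\ve_l)$. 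Summing against $p_j$ yields \eqref{qabeta}.

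\textbf{Step 4: the limit $\beta\to0_+$.} Every exponential tends to $1$ and the sum is finite, so the Abel sum telescopes back to $a^{(j)}_d=\frac1d\sum_s\ve_s-\ve_j$. Summing against $p_j$ gives $\sum_j p_j\big(\frac1d\sum_s\ve_s-\ve_j\big)=\sum_j\ve_j\big(\frac1d-p_j\big)$, which is \eqref{qa0}.

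The main point needing care is Step 1 together with the justification of trace manipulations in infinite dimension (trace-class convergence of the block sums). The bookkeeping of the block structure in Step 3 is routine but is where the $N$-independence must be checked carefully.
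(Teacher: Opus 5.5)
Your proof is correct. It shares the paper's setup: block structure of $U$, irrelevance of coherences via (ND), transition probabilities $1/g_\ell$ under both (G) and (R), and geometric sums over each band $l_n\le l_j+k<l_{n+1}$, which is where $N$ drops out. Where it differs is in how the final identity is obtained.

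The paper first packages everything into the stochastic matrix $G_{jj'}$ of \eqref{33} and writes $Q_\a=\vve\cdot G\vvp-\vve\cdot\vvp$. It then proves that this equals \eqref{qabeta} indirectly. Both sides are entire in $\beta$, and all their derivatives of order $\nu\ge 1$ at $\beta=0$ are matched by a telescoping and relabeling computation. The leftover constant is fixed by comparing the limits as $\beta\to\infty$.

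You instead sum over the final $\a$-level first. The probability $1/n$ of landing on each admissible level gives the running average $\frac1n\sum_{s\le n}\ve_s$. A single Abel summation then produces \eqref{qabeta} term by term. In that sum, $a^{(j)}_j$ is exactly the $\beta\to\infty$ flow from level $j$, and the coefficients $\frac{1}{k(k-1)}\sum_{l<k}(\ve_k-\ve_l)$ are the increments of the running averages. The same telescoping gives \eqref{qa0} directly, without appealing to $G(0)=\frac1d\mathbf 1\mathbf 1^T$.

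Your route is shorter, purely algebraic, and explains where each term in the formula comes from. The paper's detour through $G$ pays off elsewhere: the explicit matrix \eqref{33} is reused for the spectral statements (positivity of eigenvalues, detailed balance, approach to equilibrium) and, in the truncated form $G_K$, for Theorem~\ref{thm:QbetainR}. In your argument that matrix is only implicit, as the uniform landing probabilities. Your treatment of the (R) case, by linearity in $|U_{ab}|^2$ plus dominated convergence over $k$, is adequate.
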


We prove Theorem \ref{Qunbounded} in Section \ref{sec:proothm1}.
\medskip

{\bf Discussion and remarks.} (i) It is manifest from \eqref{qabeta} that the flow $Q_\a(\beta)$ is independent of the spectral gap $1/N$ of $\b$ as well as the bath multiplicity $m_\b$ (see \eqref{7}). It can be written equivalently as 
\begin{equation}
\label{qabeta'}
Q_\a(\beta)=\sum_{1\leq j < k \leq d}\frac{e^{-\beta (\ve_k -\ve_j)}p_j-p_k}{k(k-1)} \sum_{l=1}^{k-1}(\ve_k-\ve_l).
\end{equation}

(ii) As the $\ve_k$ are ordered, \eqref{qabeta} is the difference of two non-negative terms, one depending on $\beta$, the other not. The sign of $Q_\a(\beta)$ is thus determined by the relative size of these two terms, and $Q_\a(\beta)=0$ when the two are equal. 

(iii) The first term in \eqref{qabeta}  does not depend on $p_d$, the population of the most excited state. Is is strictly positive unless $p_d=1$, in which case it vanishes. The second term in \eqref{qabeta} which is subtracted from the first one, does not depend on the ground state population $p_1$. It is strictly positive unless $p_1=1$, in which case it vanishes.

(iv) The expression \eqref{qabeta'} shows that $Q_A(\beta)>0$ if $e^{\beta\ve_k}p_k$ is decreasing in $k$ and $Q_A(\beta)<0$ if $e^{\beta\ve_k}p_k$ is increasing in $k$. For $e^{\beta\ve_k} p_k$ constant in $k$, {\it i.e.}~$p_k=e^{-\beta \ve_k}/Z_\a$ (Gibbs state) we have $Q_\a(\beta)=0$.

\section{The contact temperature}
\label{sec:contt}

Suppose the flow $Q_\a(\beta)$, given by \eqref{EvarA} for $\beta>0$, has a unique root $\bop>0$. Then this root is the contact temperature of the state $\rho$ of $\a$.  The function $\beta\mapsto Q_\a(\beta)$ has the limit \eqref{qa0} as $\beta\rightarrow 0_+$. Physically, $Q_\a(0)$ is the energy flow of the system $\a$ (positive for an inflow of energy) caused by the contact with a thermometer $\b$ which has infinite equilibrium temperature $T=1/\beta=\infty$. The second law of thermodynamics says that energy will flow from a hotter body to a colder body | if both bodies are in (or close to) thermal equilibrium. In this situation we must have $Q_\a(0)>0$. However, if the state $\rho$ of $\a$ is {\it not} close to equilibrium, then there is no reason to expect $Q_\a(0)>0$ and indeed, \eqref{qa0} characterizes all $\rho$ with $Q_\a(0)<0$. Such $\rho$ are  `far from equilibrium' and they violate the second law of equilibrium thermodynamics  | the system $\a$ in such a state will give energy when coupled to a thermometer in equilibrium at arbitrarily high temperature. This is counterintuitive only if we follow the erroneous instinct to apply the second law of thermodynamics to systems far from equilibrium.
\smallskip

{\bf Two extreme cases:~ground and most excited states.}  The energy flow associated to the ground state $\rho=|\psi^\a_1\rangle\langle\psi^\a_1|$ is 
$$
Q_\a(\beta) = \sum_{k=2}^d \frac{e^{-\beta (\ve_k-\ve_1)}}{k(k-1)} \sum_{l=1}^{k-1}(\ve_k-\ve_l)>0, 
$$
which is strictly positive for all $\beta>0$ and satisfies  $\lim_{\beta\rightarrow\infty}Q_\a(\beta)=0$. So it makes sense to associate $\bop=\infty$ to the ground state of $\a$. For the most excited state $\rho=|\psi^\a_d\rangle\langle\psi^\a_d|$, \eqref{qabeta} gives 
\begin{equation}
\label{extst}
Q_\a(\beta) = -\frac1d\sum_{l=1}^{d-1}(\ve_d-\ve_l)<0. 
\end{equation}
This is a strictly negative constant, independent of $\beta$ and it has no root. We will see in Theorem \ref{thmQbar} that it is natural to assign $\bop=-\infty$ to the most excited state.

\medskip

We show in the next two sections that there is a finite $\bop\in\mathbb R$ for {\it any} state $\rho$ other than the ground and top excited states. Moreover, $\bop>0$ for $Q_\a(0)>0$ and $\bop<0$ for $Q_\a(0)<0$, while $Q_\a(0)=0$ corresponds to $\bop=0$.

\subsection{Positive contact temperature}

As discussed above, $Q_\a(0)\ge 0$ is a consequence of the second law of thermodynamics {\rm if} the state $\rho$ is close to equilibrium. Here we assign a $\bop\ge 0$ to any state $\rho$ with $Q_\a(0)\ge 0$.

\begin{thm}[Existence and regularity of non-negative contact temperature]
\label{thm2} Let $\rho$ be any state of $\a$ other than the ground state or the most excited state of $H_\a$.
\begin{itemize}
\item[{\rm (i)}] If $Q_\a(0)>0$ then there exists a unique $\bop>0$ such that $Q_\a(\bop)=0$.  

Moreover, $\bop$ is a real analytic function of the populations of $\rho$, that is of the probability vector $\{p_j\}_{j=1}^d$, on the open simplex ${\mathcal S}=\{ (p_1, p_2, \dots, p_d)\in (\mathbb R_+\backslash\{0\})^d\ | \sum_{j=1}^dp_j=1 \}$.

\item[{\rm (ii)}] If $Q_\a(0)=0$ then $\bop=0$ is the unique root of $Q_\a(\beta)$ in $[0,\infty)$.
\end{itemize}
\end{thm}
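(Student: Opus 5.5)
We work directly with the explicit formula \eqref{qabeta'} from Theorem \ref{Qunbounded}, which we regard as defining a real analytic function of $(\beta,p_1,\ldots,p_d)$ on all of $\mathbb R\times\mathbb R^d$. We write
$$
Q_\a(\beta)=A(\beta)-C,\qquad A(\beta)=\sum_{1\le j<k\le d} c_k\, p_j\, e^{-\beta(\ve_k-\ve_j)},\qquad C=\sum_{k=2}^d \frac{p_k}{k}\sum_{l=1}^{k-1}(\ve_k-\ve_l),
$$
with $c_k=\frac{1}{k(k-1)}\sum_{l=1}^{k-1}(\ve_k-\ve_l)>0$ for $k\ge2$, by (ND).

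The first step is strict monotonicity. Differentiating gives
$$
Q_\a'(\beta)=-\sum_{j<k} c_k p_j(\ve_k-\ve_j)e^{-\beta(\ve_k-\ve_j)}.
$$
Every term is non-negative. Since $\rho$ is not the most excited state, some $p_j>0$ with $j\le d-1$, and then the term with $k=d$ is strictly positive. Hence $Q_\a'<0$ on $\mathbb R$, and the same computation shows $Q_\a''>0$.

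The second step is the behaviour at infinity. As $\beta\to\infty$, $A(\beta)\to0$, so $Q_\a(\beta)\to -C$. Because $\rho$ is not the ground state, some $p_k>0$ with $k\ge2$, so $C>0$ and the limit is strictly negative. At the other end, $\lim_{\beta\to0_+}Q_\a(\beta)=Q_\a(0)$ by \eqref{qa0}, and continuity holds trivially.

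In case (i) we have $Q_\a(0)>0$. The intermediate value theorem then gives a root in $(0,\infty)$, and strict decrease makes it unique. In case (ii), $Q_\a(0)=0$ and strict decrease forces $Q_\a(\beta)<0$ for every $\beta>0$, so $0$ is the only root in $[0,\infty)$.

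For analyticity we consider $F(\beta,\vvp)=Q_\a(\beta;\vvp)$ on $\mathbb R\times\mathcal S$; it is polynomial in $\vvp$ and exponential in $\beta$, hence real analytic. At a root $(\bop,\vvp)$ with $\vvp\in\mathcal S$, the first step gives $\partial_\beta F<0$, because all $p_j>0$. The real analytic implicit function theorem then supplies a local analytic solution $\beta(\vvp)$. Uniqueness of the root lets these local solutions patch together into a global analytic function $\bop(\vvp)$.

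This is defined on the open set $\{\vvp\in\mathcal S: Q_\a(0)>0\}$, which is open because $Q_\a(0)$ is affine in $\vvp$. The statement ``on the open simplex'' should be read as the restriction to this region. Alternatively, using the analytic extension of the formula to negative $\beta$, the same argument gives analyticity on all of $\mathcal S$, provided the extended function also has a unique root.

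The main point to watch is therefore making sure the domain of analyticity is correctly identified, together with strict positivity of $\partial_\beta F$ near the boundary. Everything else is elementary.
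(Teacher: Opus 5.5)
Your proof is correct and takes essentially the same route as the paper. Both arguments use strict monotonicity obtained by differentiating the explicit formula for $Q_\a$, the strictly negative limit as $\beta\to\infty$, the intermediate value theorem, and the analytic implicit function theorem combined with uniqueness of the root. The only difference is that the paper extends $Q_\a$ to all $\beta\in\mathbb R$ and uses $\lim_{\beta\to-\infty}Q_\a(\beta)=+\infty$. Together with your $Q_\a'<0$ on $\mathbb R$, this gives a unique real root for every $\vvp\in\,]0,1[^d$, so your ``provided'' clause holds automatically and analyticity follows on all of $\mathcal S$. Your restriction to $\{Q_\a(0)>0\}$ is exactly the region where this root is the positive contact temperature of part (i).
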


{\bf Passive and active states.} Consider a Hamiltonian $H={\rm diag}(E_1,\ldots,E_d)$, with possibly degenerate energies. A state $\rho$ is called passive if and only if $\tr(\rho H)\le \tr(V\rho\, V^* H)$ for all unitaries $V$. If $\rho$ is not a passive state, then it is called an active state. One can show that $\rho$ is passive if and only if $\rho={\rm diag}(p_1,\ldots,p_d)$ is diagonal in the $H$ basis and $E_j\le E_k\Rightarrow p_j\ge p_k$ \cite{KouKou}. For our model, it follows from \eqref{qa0} and the fact that the eigenvalues $\ve_j$ of $H_\a$ are ordered (see the condition (ND)), that $Q_\a(0)\ge 0$  for all passive states, see the Remark (iv) below Theorem \ref{Qunbounded} for $\beta=0$.
\begin{cor}
\label{pass/act:corollary}
Let $\rho$ be a passive state other than the ground state of $H_\a$. Then  $\bop> 0$. There are active states for which $\bop>0$. 
\end{cor}
{\bf Example.} One may rewrite \eqref{qa0} as
$$
Q_\a(0) = \sum_{j=1}^{d-1} \Big(p_j-\frac 1d\Big)(\ve_d-\ve_j).
$$
For a qubit ($d=2$) we have $Q_\a(0)\ge 0 \Leftrightarrow p_1\ge p_2$, with equivalence of the equalities. So $\bop\ge 0$ is equivalent with a diagonal $\rho$ being a passive state. For $d=3$, $Q_\a(0)=(p_1-\frac13)(\ve_3-\ve_1)+(p_2-\frac13)(\ve_3-\ve_2)$. By choosing $p_2>p_1$ and $\ve_1<\ve_2<\ve_3$ such that $Q_\a(0)>0$ we obtain an active state $\rho={\rm diag}(p_1,p_2,1-p_1-p_2)$ for which $\bop>0$. In particular, any $p_1,p_2$ with $\frac13<p_1<p_2$ will give such an active state.

\medskip

{\bf Proof of Theorem \ref{thm2}.} The proof of Theorem \ref{thm2} is simple. Viewing $Q_\a(\beta)$, \eqref{qabeta} as a function of $\beta\in\mathbb R$, we have for $p_1\neq 1$, $p_d\neq 1$,
$$
\lim_{\beta\rightarrow -\infty} Q_\a(\beta) = \infty,\qquad \lim_{\beta\rightarrow \infty} Q_\a(\beta) = - \sum_{k=2}^{d}\frac{p_k}{k}\sum_{\ell=1}^{k-1}(\ve_k-\ve_\ell)<0.
$$
Consequently as the function $\beta\mapsto Q_\a(\beta)$ is continuous it must have at least one root in $\mathbb R$. This root is actually unique, because the function is strictly decreasing. In fact, it follows directly from \eqref{qabeta} that for $n=1,2,\ldots$
(and $p_1\neq 1$, $p_d\neq 1$),
\begin{align}
(-1)^n\partial_\beta^n Q_\a(\beta)>0.
\end{align}
We see directly that the unique root satisfies $\bop>0\Leftrightarrow Q_\a(0)>0$ | see also Fig.~\!\ref{Fig.1}. 

Let us turn to the regularity of $\bop$ in the probability vector $\mathbf p:=\{p_j\}_{j=1}^d$ for $Q_\a(0)>0$. We write $Q_\a(\mathbf p,\beta)$ when we consider $\mathbf p$ as a variable. The expression \eqref{qabeta} gives the extension of $Q_\a(\mathbf p,\beta)$ to $]0,1[^d\times \mathbb R^*_+$ and being linear in $\mathbf p$, this function admits an analytic extension to $\mathbb C^d\times \mathbb C$. By the argument above, for any $\mathbf p\in ]0,1[^d$, there exists a unique $\beta(\mathbf p)>0$ such that $Q_\a(\mathbf p,\beta(\mathbf p))=0$. The analytic implicit function theorem (see {\it e.g.} \cite{D}, Chap. X, \S 2, (10.2.1) and (10.2.4)) implies that the map $]0,1[^d\ni \mathbf p\mapsto \beta(\mathbf p)$ is real analytic. Restricting this function $\beta$ to $\mathcal S\subset ]0,1[^d$ by expressing one $p_k$ as a function of the others, we get a real analytic function which coincides with $\bop(\mathbf p)$.
\hfill \qed
\medskip

{\it Remark:} As seen in point (iv) after Theorem \ref{Qunbounded} we have $Q_\a(\mathbf p_{\beta_0},\beta_0)=0$ for any $\beta_0>0$, where we use the notation of the proof of Theorem \ref{thm2}, with $(\mathbf p_\beta)_k=e^{-\beta \ve_k}/Z_\a$, $k=1,\dots, d$ representing the Gibbs thermal distribution. As $\mathbf p\mapsto Q_\a(\mathbf p,\beta_0)$ is linear in $\mathbf p$, the set of all probability distributions $\mathbf p\in\bar{\mathcal S}$ associated to the same contact temperature $\beta_0$, $\bop^{-1}(\{\beta_0\})=\{\mathbf p_{\beta_0}+ \ker Q_\a(\cdot , \beta_0)\}\cap \bar {\mathcal{S}}$, is a section of the simplex $\bar{\mathcal S}$ of dimension $d-2$.

\subsection{\bf Negative contact temperature}

We now consider the situation $Q_\a(0)<0$ which is complementary to that treated in Theorem \ref{thm2}. 
To derive the existence of $\bop\ge 0$ in Theorem \ref{thm2} we argued that we may consider $Q_\a(\beta)$ as a function of $\beta\in\mathbb R$ | simply extending the expression \eqref{qabeta} of Theorem \ref{Qunbounded} to all $\beta\in\mathbb R$. For $\beta>0$ the quantity $Q_\a(\beta)$ has a physical meaning (difference of energy of $\a$ before and after the effect of the thermometer). But for $\beta<0$ it does not. Even though the expression \eqref{qabeta} is well defined (even analytic) for all $\beta$, as discussed above, it is unphysical for $\beta<0$.

\smallskip

Consider then a truncated version of the thermometer Hamiltonian given in \eqref{7},
\begin{equation}
\label{f1}
H_{\b,K} = \sum_{k=0}^K \eta_k |\psi^\b_k\rangle\langle\psi_k^\b|,\qquad \eta_k = \frac kN,
\end{equation}
where $K\in\mathbb N\cup\{0\}$, and which acts on the truncated Hilbert space $\mathbb C^{K+1}$. We focus on the nondegenerate case ($m_\b=1$). The thermometer Gibbs state can now be defined for any $\beta\in\mathbb R$ as in \eqref{thermoB},
\begin{equation}
\label{f15}
\tau_{\beta,K}= \frac{e^{-\beta H_{\b,K}}}{Z_{\b,K}},\qquad \text{with}\qquad Z_{\b,K} = \frac{1-e^{-\frac{\beta}{N}(K+1)}}{1-e^{-\frac{\beta}{N}}}.
\end{equation}
(For $\beta=0$ the formula for $Z_{\b,K}$ reduces to $Z_{\b,K}=K+1$.)  The energy flow of $\a$ is defined as in \eqref{EvarA} with $\tau_\beta$ replaced by $\tau_{\beta,K}$,
\begin{equation}
\label{EvarAK}
Q_{\a,K} = \tr{\big(U(\rho\otimes\tau_{\beta,K}) U^* H_\a\big)} - \tr{(\rho H_\a)},
\end{equation}
where $U$ satisfies \eqref{m1} with $H_{\a\b}=H_\a+H_{\b,K}$, and thus depends on $K$. We define the flow in the limit of infinitely many levels of the thermometer,
\begin{equation}
\bar Q_\a(\beta) =\lim_{K\rightarrow\infty}Q_{\a,K}(\beta).
\end{equation}
Then we have the following result 
\begin{thm}
\label{thmQbar}
\ 
\begin{itemize}
\item[\rm \bf (1)] Suppose $\rho$ is neither the ground nor the most excited state of $H_\a$. Then there exists a unique $\bop\in\mathbb R$ such that $\bar Q_\a(\bop)=0$. Moreover, $\bop$ has the same sign as $\bar Q_\a(0)$ and $\bop=0$ if and only if $\bar Q_\a(0)=0$. 

For $\bar Q_\a(0)\neq 0$, $\bop$ is a real analytic function of the probability vector $\mathbf p=\{p_k\}_{k=1}^d$ (the popluations of $\rho$) on the open simplex  ${\mathcal S}=\{ (p_1, p_2, \dots, p_d)\in (\mathbb R_+\backslash\{0\})^d\ | \sum_{j=1}^dp_j=1 \}$.

\item[\rm\bf (2)] If $\rho$ is the ground state, then $\bar Q_\a(\beta)>0$ for all $\beta\in \mathbb R,$ and $\lim_{\beta\rightarrow\infty}\bar Q_\a(\beta)=0$. If $\rho$ is the most excited state, then $\bar Q_\a(\beta)<0$ for all $\beta\in \mathbb R,$ and $\lim_{\beta\rightarrow-\infty}\bar Q_\a(\beta)=0$.
\end{itemize}
\end{thm}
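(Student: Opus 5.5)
The plan is to first derive an explicit formula for $\bar Q_\a(\beta)$, namely the formula $Q^\pm_\a$ announced in the introduction, and then repeat the monotonicity argument of the proof of Theorem \ref{thm2} on each half-line. Throughout we work with $m_\b=1$ and assumption {\bf (G)}. We assume, as in the proof of Theorem \ref{Qunbounded}, that $\rho$ may be replaced by its diagonal part $\mathrm{diag}(p_1,\ldots,p_d)$: the off-diagonal part does not contribute to the energy flow, by energy conservation.

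\emph{Step 1: the finite-$K$ flow and its limit.} Each eigenvalue $E$ of $H_\a+H_{\b,K}$ has an eigenspace spanned by the states $\psi^\a_s\otimes\psi^\b_{k}$ with $\ve_s+k/N=E$ and $0\le k\le K$. By {\bf (G)}, each such eigenspace has degeneracy $g_E\le d$, and after the contact every eigenstate in it carries the average $g_E^{-1}\sum p_s (\tau_{\beta,K})_{kk}$ of the incoming weights. Consequently
\[
Q_{\a,K}(\beta)=\sum_E \frac{1}{g_E}\Big(\sum_{(s,k)\in E} p_s e^{-\beta k/N}\Big)\Big(\sum_{(s',k')\in E}\ve_{s'}\Big)\Big/Z_{\b,K}-\sum_s p_s\ve_s .
\]
The levels split into three groups:
\begin{itemize}
\item the bottom levels, those with $E<\ve_d$, where $g_E<d$;
\item a bulk of levels with $g_E=d$;
\item the top levels, those with $E>\ve_1+K/N$, where $g_E<d$ again.
\end{itemize}
In the bulk, full averaging over all $d$ system levels produces no net change once summed against the normalized Gibbs weights. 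The flow therefore comes only from the two edges.

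For $\beta>0$ the top edge carries weight $O(e^{-\beta K/N})$, and only the bottom edge survives as $K\to\infty$. Its computation is exactly that of Theorem \ref{Qunbounded} and yields $Q^+_\a(\beta)$.

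For $\beta<0$ the bottom edge is exponentially suppressed relative to $Z_{\b,K}$. The top edge is handled by the reflection $k\mapsto K-k$, $s\mapsto d+1-s$, $\ve_s\mapsto-\ve_{d+1-s}$, $\beta\mapsto-\beta$. This reflection maps the top-edge problem onto a bottom-edge problem, so it gives $Q^-_\a(\beta)$ directly from the $Q^+$ computation.

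At $\beta=0$ both edges have weight $O(1/K)$, and one checks that $Q^+_\a(0)=Q^-_\a(0)=\sum_j\ve_j(\frac1d-p_j)$. Hence $\bar Q_\a$ is continuous on $\mathbb R$.

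I expect this step to be the main obstacle. The difficulty is the careful bookkeeping of the edge degeneracies through the integers $l_s$ of {\bf (S)}, together with checking that the bulk contributes nothing for every $\beta\in\mathbb R$ uniformly in $K$. The trick is to handle both the $\beta<0$ and $\beta=0$ cases by the reflection rather than by a second direct computation.

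\emph{Step 2: monotonicity and limits.} For $\beta>0$, strict decrease of $Q^+_\a$ when $p_1,p_d\neq1$ is already shown in the proof of Theorem \ref{thm2}. For $\beta\le0$, each term of $Q^-_\a$ with $k<j$ has the form $e^{\beta(\ve_j-\ve_k)}p_j$ times $\sum_{l>k}(\ve_k-\ve_l)<0$. Its $\beta$-derivative is therefore $\le0$, and it is strictly negative as soon as some $p_j>0$ with $j\ge2$, that is, as soon as $p_1\ne1$. So $\bar Q_\a$ is continuous and strictly decreasing on $\mathbb R$. Its limits are
\[
\lim_{\beta\to\infty}\bar Q_\a=-\sum_{k\ge2}\frac{p_k}{k}\sum_{l<k}(\ve_k-\ve_l)<0 \quad (p_1\ne1),
\qquad
\lim_{\beta\to-\infty}\bar Q_\a=-\sum_{k<d}\frac{p_k}{d-k+1}\sum_{l>k}(\ve_k-\ve_l)>0 \quad (p_d\ne1).
\]
The intermediate value theorem then gives a unique root $\bop$, and strict monotonicity gives $\mathrm{sign}\,\bop=\mathrm{sign}\,\bar Q_\a(0)$.

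Real analyticity of $\bop$ for $\bar Q_\a(0)\ne0$ follows as in the proof of Theorem \ref{thm2}. The root then lies in the open half-line where $\bar Q_\a$ coincides with $Q^+_\a$ or with $Q^-_\a$. Each of these is linear in $\mathbf p$ and entire in $\beta$, with $\partial_\beta\ne0$ at the root, so the analytic implicit function theorem applies.

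\emph{Step 3: part (2).} For the ground state ($p_1=1$), $Q^+_\a>0$ was computed earlier, and it tends to $0$ as $\beta\to\infty$. Moreover $Q^-_\a\equiv-\frac{1}{d(d-1)}\sum_{l>1}(\ve_1-\ve_l)>0$ is constant, so $\bar Q_\a(\beta)>0$ for all $\beta\in\mathbb R$. The most excited state is handled by the same reflection symmetry as in Step 1. This gives the constant negative value \eqref{extst} for $\beta>0$, a strictly negative $Q^-_\a$ for $\beta\le0$, and $Q^-_\a(\beta)\to0$ as $\beta\to-\infty$.
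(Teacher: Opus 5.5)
\textbf{Step 1 does not hold as written.} Your Steps 2 and 3 are essentially the paper's own argument. The paper deduces Theorem \ref{thmQbar} from the explicit formula of Theorem \ref{thm:QbetainR} using strict monotonicity of $Q^\pm_\a$, the limits \eqref{18.1}--\eqref{18.2}, the intermediate value theorem and the analytic implicit function theorem. The problem is Step 1, where you re-derive that formula.

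Your claim that the bulk levels ($g_E=d$) produce no net change is false, so the flow does not come only from the edges. On a bulk level $E$ the incoming weights $w_s\propto p_se^{-\beta(E-\ve_s)}$ are replaced by their average. This changes the energy of $\a$ by $\sum_s w_s(\bar\ve-\ve_s)$, with $\bar\ve=\frac1d\sum_s\ve_s$, which vanishes only if $p_se^{\beta\ve_s}$ is constant in $s$, i.e.\ for Gibbs populations. Summing over the bulk and letting $K\to\infty$ gives:
\begin{itemize}
\item $\sum_s p_se^{-\beta(\ve_d-\ve_s)}(\bar\ve-\ve_s)$ for $\beta>0$;
\item $\sum_s p_se^{-\beta(\ve_1-\ve_s)}(\bar\ve-\ve_s)$ for $\beta<0$;
\item $\sum_s p_s(\bar\ve-\ve_s)=\sum_j\ve_j(\frac1d-p_j)$ at $\beta=0$.
\end{itemize}
Your own $\beta=0$ remark exposes the inconsistency. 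If only edges of relative weight $O(1/K)$ contributed, the limit would be $\bar Q_\a(0)=0$, not the value you assert. For $d=2$ it is starker still: every edge level is nondegenerate, so $U$ acts on it by a phase and the edges contribute nothing, and the entire flow is bulk. Likewise, ``the bottom edge, computed as in Theorem \ref{Qunbounded}'' is not what that theorem computes. The $s=d$ term $\frac1d e^{-\beta(\ve_d-\ve_{j'})}$ in \eqref{33} is exactly the bulk.

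\textbf{How to repair it.} The fix keeps your best idea.
\begin{itemize}
\item For $\beta>0$, keep the bulk and show that $G_K\to G$ of \eqref{33}. Only the top edge and the part of the bulk beyond $\ell=K$ drop out, and both are $O(e^{-\beta K/N})$.
\item Apply your reflection to the \emph{whole} finite-$K$ system, not just to the top edge. The unitary $W:\psi^\a_s\otimes\psi^\b_k\mapsto\psi^\a_{d+1-s}\otimes\psi^\b_{K-k}$ gives $W(H_\a+H_{\b,K})W^*=\frac KN-(H'_\a+H_{\b,K})$ with $\ve'_s=-\ve_{d+1-s}$, which is again ordered and satisfies (S). It maps $\tau_{\beta,K}$ to $\tau_{-\beta,K}$ and preserves (G). This yields the exact identity $Q_{\a,K}(\beta;\vec\ve,\vec p)=-Q_{\a,K}(-\beta;\vec\ve\,',\vec p\,')$ with $p'_s=p_{d+1-s}$.
\item The case $\beta<0$ then follows from $\beta>0$ and yields $Q^-_\a$. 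This is a bit cleaner than the paper's permutation $V$, which keeps $\beta$ and feeds the \emph{unordered} vector $V\vec\ve$ into \eqref{qabeta}.
\item The value at $\beta=0$ must be computed directly, since the reflection fixes $\beta=0$; it is pure bulk.
\end{itemize}
Alternatively, you can simply cite Theorem \ref{thm:QbetainR}.

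\textbf{A minor slip in Step 3.} For the ground state, $Q^-_\a\equiv\frac1d\sum_{l>1}(\ve_l-\ve_1)$. You dropped the factor $d-1$ that comes from summing over $j$. The sign, which is all you need, is still right.
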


In accordance with statement (2) of Theorem \ref{thmQbar} we assign $\bop=\infty$ to the ground state and $\bop=-\infty$ to the most excited state.
\medskip

Theorem \ref{thmQbar} is a consequence of the following result, proven in Section \ref{sec:proofthmQbar}. 

\begin{thm}
\label{thm:QbetainR}
Let $0\neq\beta\in\mathbb R$. Then we have 
\begin{equation}
\label{f34}
\bar Q_\a(\beta) = 
\left\{
\begin{array}{ll}
Q^+_\a(\beta):=\displaystyle\sum_{1\leq j < k \leq d}\frac{e^{-\beta (\ve_k -\ve_j)}p_j-p_k}{k(k-1)} \sum_{l=1}^{k-1}(\ve_k-\ve_l), & \quad \beta>0,\\ [15pt]
Q^-_\a(\beta) := \displaystyle\sum_{1\le k< j\le d} \frac{e^{-\beta(\ve_k-\ve_j)}p_j - p_k}{(d-k+1)(d-k)} \sum_{l=k+1}^d(\ve_k-\ve_l), & \quad \beta<0.
\end{array}
\right.
\end{equation}
Both $Q^\pm_\a(\beta)$ are defined for $\beta\in\mathbb R$ (are in fact analytic entire in $\beta$) and we have 
\begin{equation}
\label{contflowzero'}
Q^+_\a(0)=Q^-_\a(0) = \sum_{j=1}^d \ve_j(\frac1d -p_j),
\end{equation}
which also coincides with the limit of $Q_{\a,K}(0)$ as $K\rightarrow\infty$.
\end{thm}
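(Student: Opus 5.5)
The plan is to compute $Q_{\a,K}(\beta)$ exactly for finite $K$ under {\bf (G)} (with $m_\b=1$), then take $K\to\infty$. For $\beta>0$ the limit will be compared with Theorem \ref{Qunbounded}. The case $\beta<0$ will be reduced to $\beta>0$ by a reflection symmetry.

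\textbf{Step 1: block structure of $Q_{\a,K}$.} By {\bf (S)}, the product state $\psi^\a_s\otimes\psi^\b_k$ has energy $\ve_1+(l_s+k)/N$. The eigenspace of total level $n$ is therefore spanned by the pairs $(s,n-l_s)$ with $0\le n-l_s\le K$; write $g_n$ for its dimension. The only matrix elements of $\rho\otimes\tau_{\beta,K}$ linking this block to itself are diagonal: a coherence $\rho_{ss'}$ with $s\neq s'$ is paired with a single thermometer level $k$ and so joins two different blocks. Since $U=\bigoplus U^{(n)}$, the coherences of $\rho$ do not affect the diagonal of $U(\rho\otimes\tau)U^*$. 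Under {\bf (G)}, all squared moduli inside block $n$ equal $1/g_n$; under {\bf (R)}, the Haar average of $|U_{ab}|^2$ is also $1/g_n$. Hence
$$
Q_{\a,K}(\beta)=\sum_{s'=1}^d p_{s'}\sum_{k=0}^K (\tau_{\beta,K})_k \frac{1}{g_{n}}\sum_{s:\,(s,n-l_s)\in\text{block }n}(\ve_s-\ve_{s'}),\qquad n=l_{s'}+k .
$$

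\textbf{Step 2: the case $\beta>0$.} Fix $s'$ and $k$. Once $K\ge l_d+k$, the block containing $(s',k)$ is the same as for the infinite thermometer: it consists of those $s$ with $l_s\le n$, the upper constraint $n-l_s\le K$ being inactive. So each summand converges to the corresponding summand for the infinite thermometer. Moreover $(\tau_{\beta,K})_k\to(\tau_\beta)_k$, and the summands are dominated by $(\ve_d-\ve_1)\,C e^{-\beta k/N}$, which is summable for $\beta>0$. Dominated convergence then gives $\bar Q_\a(\beta)=Q_\a(\beta)$ of \eqref{EvarA}, and Theorem \ref{Qunbounded} (in the form \eqref{qabeta'}) identifies this with $Q^+_\a(\beta)$.

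\textbf{Step 3: the case $\beta<0$, by reflection.} Relabel the thermometer by $k\mapsto K-k$ and the system by $s\mapsto d+1-s$. Set $\ve'_j=-\ve_{d+1-j}$ (still increasing, with integer gaps $l'$) and $p'_j=p_{d+1-j}$. The total Hamiltonian becomes $-(H'_\a+H_{\b,K})+\text{const}$, so eigenspaces map to eigenspaces. Hence the equiprobability property {\bf (G)} is preserved, while $\tau_{\beta,K}$ becomes $\tau_{-\beta,K}$. This should give
$$
Q_{\a,K}(\beta;\ve,p)=-\,Q_{\a,K}(-\beta;\ve',p').
$$
Applying Step 2 at $-\beta>0$ and substituting $j\mapsto d+1-j$, $k\mapsto d+1-k$ turns the denominators $k(k-1)$ into $(d-k+1)(d-k)$ and the sums over $l<k$ into sums over $l>k$. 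The overall sign flips $\ve'_k-\ve'_l$ back to $\ve_k-\ve_l$, yielding $Q^-_\a(\beta)$. I will check these index substitutions carefully, since this is where errors are most likely.

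\textbf{Step 4: the value at $\beta=0$.} Both $Q^\pm_\a$ are finite sums of exponentials, hence entire. The identity $Q^+_\a(0)=\sum_j\ve_j(\frac1d-p_j)$ is \eqref{qa0}, and $Q^-_\a(0)$ then follows from the reflection, because the right side is invariant under $(\ve,p)\mapsto(\ve',p')$ combined with the sign flip. For $Q_{\a,K}(0)$ the weights are uniform, $1/(K+1)$. Only $O(l_d)$ values of $k$ lie in boundary blocks, so their total weight is $O(l_d/K)$ and vanishes in the limit. In every bulk block $g_n=d$ and all $s$ appear, so the limit is $\sum_{s'}p_{s'}\frac1d\sum_s(\ve_s-\ve_{s'})$, which is the stated value.

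\textbf{Main obstacle.} I expect the delicate points to be the justification in Step 1 that coherences drop out and that only $|U_{ab}|^2$ matters, and the exact index bookkeeping of the reflection in Step 3. The limit interchange itself is routine once the domination is in place.
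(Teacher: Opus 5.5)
Your proof is correct, but it takes a different route from the paper.

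The paper computes the truncated stochastic matrix $G_K(\beta)$ in closed form, \eqref{f19} and \eqref{beta0}. It does so by summing geometric series over the trapezoidal degeneracy profile of $H_\a+H_{\b,K}$. It then lets $K\to\infty$ separately for $\beta>0$, $\beta<0$ and $\beta=0$ to obtain $G^\pm(\beta)$. Only afterwards does it observe the algebraic identity $G^-(\beta,\vve)=V^*G^+(\beta,V\vve)V$ of \eqref{f28}, and it substitutes \eqref{qabeta} with the reversed vector $V\vve$.

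You never compute $G_K$:
\begin{itemize}
\item For $\beta>0$, block stabilization once $K\ge l_d+k$, together with summable Gibbs weights, lets dominated convergence reduce the limit directly to Theorem~\ref{Qunbounded}.
\item For $\beta<0$, your relabeling $\psi^\a_s\otimes\psi^\b_k\mapsto\psi^\a_{d+1-s}\otimes\psi^\b_{K-k}$ gives an exact identity for every $K$, namely $Q_{\a,K}(\beta;\ve,p)=-Q_{\a,K}(-\beta;\ve',p')$. The index substitution you flagged as risky does land exactly on $Q^-_\a$.
\item At $\beta=0$, a bulk/boundary count replaces the explicit formula.
\end{itemize}

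Since $G^+$ depends on $(\beta,\vve)$ only through the products $\beta\ve_i$, one has $G^+(\beta,V\vve)=G^+(-\beta,-V\vve)$. So \eqref{f28} is exactly the $K\to\infty$ limit of your finite-$K$ symmetry.

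Your version buys two things. It gives a physical explanation of the $Q^+\leftrightarrow Q^-$ duality: for $\beta<0$ the top of the truncated thermometer plays the role that the ground region plays for $\beta>0$. It also keeps the reflected energies $\ve'_j=-\ve_{d+1-j}$ increasing and compliant with \textbf{(S)}, so Theorem~\ref{Qunbounded} applies as stated. The paper instead feeds the decreasing vector $V\vve$ into \eqref{qabeta}. It thereby tacitly uses that this formula, proved for ordered energies, persists as an algebraic identity.

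The paper's heavier computation, in turn, produces the explicit matrices $G_K(\beta)$ and $G^\pm(\beta)$. It reuses these afterwards, for the continuity of the limiting $G(\beta)$ at $\beta=0$ and in the final lemma on the spectrum of $G^-(\beta)$ for $\beta<0$. Your argument as written yields only the limiting flow, although the same two devices would also give the entrywise limits of $G_K$.
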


The $\bar Q_\a(\beta)$, $\beta\in\mathbb R$ given in \eqref{f34} is a natural extension of $Q_\a(\beta)$, $\beta>0$ of \eqref{qabeta}. To derive Theorem \ref{thmQbar} we note that $\bar Q_\a(\beta)$ is continuous for $\beta\in\mathbb R$, even at $\beta=0$. Moreover, the explicit expressions show that both $Q^-_\a(\beta)$ and $Q^+_\a(\beta)$ are strictly decreasing in $\beta\in\mathbb R$ (unless $\rho$ is the ground or the top excited state, in which case $Q_\a^-(\beta)$ or $Q_\a^+(\beta)$ are constant, respectively), and that $Q^-_\a(\beta)$ is concave and $Q^+_\a(\beta)$ is convex.  Moreover,
\begin{align}
\lim_{\beta\rightarrow +\infty}\bar Q_\a(\beta) &= -\sum_{1\leq j < k \leq d}\frac{p_k}{k(k-1)} \sum_{l=1}^{k-1}(\ve_k-\ve_l)\le 0, \label{18.1}\\
\lim_{\beta\rightarrow -\infty} \bar Q_\a(\beta) &= -\sum_{1\le k< j\le d} \frac{p_k}{(d-k+1)(d-k)} \sum_{l=k+1}^d(\ve_k-\ve_l) \ge 0.\label{18.2}
\end{align}
The value of $\bar Q_\a(\beta)$ now stays bounded as $\beta\rightarrow-\infty$, as it is truly an average energy difference of the finite system $\a$. If $\rho$ is not the ground state, then the inequality of \eqref{18.1} is strict (and it is an equality for the ground state). If $\rho$ is not the most excited state, then the inequality \eqref{18.2} is strict (and it is an equality for the most excited state). So for $\rho$ none of these two states, $\bar Q_\a(\beta)$, \eqref{f34} has a unique root $\bop\in\mathbb R$ whose sign is the same as that of $\bar Q_\a(0)$, \eqref{contflowzero'}. The regularity of $\bop\neq 0$ for $\bar Q_\a(0)\neq 0$  as a function of $\mathbf p\in \mathcal{S}$ is proven as in the case of Theorem \ref{thm2}.
\medskip

{\bf Examples.} {\bf (1)} Consider first the qubit case $d=2$, and $\rho$ such that $p_1p_2\neq 0$, to avoid the extreme cases dealt with above.  We compute
$$
Q^+_\a(\beta)= \frac{\varepsilon_2-\varepsilon_1}{2}\big ( e^{-\beta (\varepsilon_2-\varepsilon_1)}p_1-p_2\big),\qquad 
    Q^-_\a(\beta)= \frac{\varepsilon_1-\varepsilon_2}{2}\big ( e^{-\beta (\varepsilon_1-\varepsilon_2)}p_2-p_1\big)
$$
and $Q^\pm_\a(0)= \frac{\varepsilon_2-\varepsilon_1}{2}(p_1-p_2)$. We point out that  $\partial_\beta Q_\a^-(\beta)|_{\beta=0}=-\frac{(\ve_2-\ve_1)^2}{2}p_2$ and $\partial_\beta Q_\a^+(\beta)|_{\beta=0}=-\frac{(\ve_2-\ve_1)^2}{2}p_1$, which shows that $\bar Q_\a$ is not differentiable at $\beta=0$ unless $p_1=p_2$.
In all cases, 
\begin{equation}\label{betaqbit}
\bop=\frac{\ln(p_1)-\ln(p_2)}{\varepsilon_2-\varepsilon_1}.
\end{equation}
Note also that for any fixed $\beta\in \mathbb R$, we have $\bar Q_\a(\beta)=0$ if and only if
\begin{equation}\label{gibbstemp}
    p_j=\frac{e^{-\beta \varepsilon_j}}{e^{-\beta \varepsilon_1}+e^{-\beta \varepsilon_1}}=\braket{\psi_j^\a|\gamma_{\beta}\, \psi_j^\a}, \quad j=1,2.
\end{equation}
In the two-dimensional case, the condition $\bar Q(\beta)=0$ thus fixes the diagonal of $\rho$ in the $H_\a$ basis. This is due to  the fact that for qubits, the trace condition and $\bar Q_\a(\beta)=0$ uniquely determine the state, and that the Gibbs state $\gamma_{\beta}$ always satisfies both constraints. However, there is a non-uniqueness or indeterminacy of the coherence (off-diagonal density matrix element). Indeed, all qubit density matrices with the given diagonal and varying coherence $z$ satisfying $|z|^2\le p_1(1-p_1)$ have the {\it same} associated contact temperature $\bop$. In higher dimensions $\bop$ does not fix the diagonal of the state. 
Finally, we remark that the effective temperature $\beta^*$ defined as the unique zero of
\begin{align}\label{defbeta^*}
    \beta\mapsto \tr (H_\a \rho)-\tr(H_\a \gamma_\beta),
\end{align}
the minimizer of the relative entropy $S(\rho|\gamma_\beta)$, see Section \ref{sec:intro}, coincides in the qubit case with $\bop$ given by \eqref{betaqbit}.

\medskip

{\bf (2)} 
We now turn to $d=3$ to further illustrate properties of $\bop$, showing in particular that $\bop$ and $\beta^*$ are in general unrelated.
Assume $p_1, p_3\neq 1$. In this case, $\bop$ is determined by the vanishing of $\bar Q_\a$ given by \eqref{f34}
\begin{align}
\bar Q_\a^+(\beta)
={}&
\Bigg[
\frac12(\varepsilon_2-\varepsilon_1)e^{-\beta(\varepsilon_2-\varepsilon_1)}
+
\frac16\big(2\varepsilon_3-\varepsilon_1-\varepsilon_2\big)
e^{-\beta(\varepsilon_3-\varepsilon_1)}
\Bigg]p_1
\label{qd=3}\\
&+
\Bigg[
-\frac12(\varepsilon_2-\varepsilon_1)
+
\frac16\big(2\varepsilon_3-\varepsilon_1-\varepsilon_2\big)
e^{-\beta(\varepsilon_3-\varepsilon_2)}
\Bigg]p_2
-
\frac13
\big(2\varepsilon_3-\varepsilon_1-\varepsilon_2\big)
\,p_3\qquad \text{for $\beta>0$}
\nonumber
\end{align}
According to \eqref{defbeta^*}$, \beta^*$ is characterized by 
\begin{equation}
\label{betastar3}
e^{-\beta \varepsilon_1}(\varepsilon_1-\tr(\rho H_\a))+e^{-\beta \varepsilon_2}(\varepsilon_2-\tr(\rho H_\a))+e^{-\beta \varepsilon_3}(\varepsilon_3-\tr(\rho H_\a))=0.
\end{equation}
Consider states satisfying 
\begin{equation}
\label{rest}
    \varepsilon_2=\tr(\rho H_\a)=p_1\varepsilon_1+p_2\varepsilon_2+p_3\varepsilon_3.
\end{equation}
From this and from the trace condition, we deduce that the state can be parameterized by $p_2$,
\begin{align}
\label{paramp2}
0&\leq p_2\leq 1,\ \ 
    p_1=(1-p_2)\frac{\varepsilon_3-\varepsilon_2}{\varepsilon_3-\varepsilon_1}, \ \ 
    p_3=(1-p_2)\frac{\varepsilon_2-\varepsilon_1}{\varepsilon_3-\varepsilon_1}.
\end{align}
\begin{figure}[h!]
\centering
\includegraphics[width=.8\linewidth]{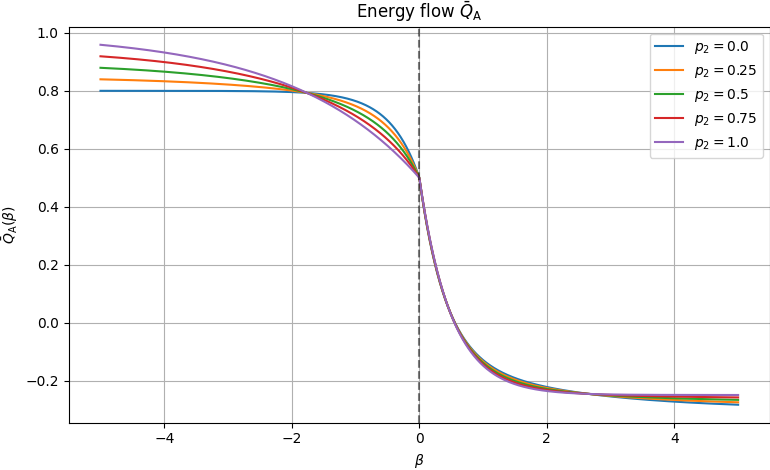}
\caption{The energy flow $\bar Q_\a(\beta)$, \eqref{f34} for states parameterized by $p_2$ as per \eqref{paramp2}, at $\ve_1=0, \ve_2=1/2, \ve_3=5/2$, for various values of $0\leq p_2\leq 1$.}
   \label{Fig.2}
\end{figure}
We deduce from \eqref{betastar3}, \eqref{rest} and \eqref{paramp2} that
\begin{equation}
\label{3levelbetas}
\beta^*=\frac{1}{\varepsilon_3-\varepsilon_1}\ln(\frac{\varepsilon_3-\varepsilon_2}{\varepsilon_2-\varepsilon_1})=\frac{\ln(p_1)-\ln(p_3)}{\varepsilon_3-\varepsilon_1}.
\end{equation}
We now consider the situation for large $\ve_3$. As $\bar Q_\a(0)=(\ve_1+\ve_3-2\ve_2)/3$, which is independent of $p_2$, we have $\bar Q_\a(0)>0$ and $\bop>0$ for all $0\le p_2\le 1$. From \eqref{paramp2}, \eqref{3levelbetas} we have for fixed $0\le p_2\le 1$ and large $\ve_3$,
\begin{align}
\label{Est1}
p_1\sim 1-p_2, & \quad p_3\sim (1-p_2)\frac{\ve_2-\ve_1}{\ve_3},\\
\quad e^{-\beta^*(\ve_3-\ve_1)}\sim \frac{\ve_2-\ve_1}{\ve_3},\quad &e^{-\beta^*(\ve_3-\ve_2)}\sim \frac{\ve_2-\ve_1}{\ve_3},\quad e^{-\beta^*(\ve_2-\ve_1)}\sim 1.
\label{Est1.5}
\end{align}
Using \eqref{Est1}, \eqref{Est1.5} in \eqref{qd=3} gives,
\begin{equation}
\bar Q_\a^+(\beta^*)
\sim \frac16 (\ve_2-\ve_1)(1-2p_2). 
\label{Est2}
\end{equation}
The sign of $\bar Q_\a^+(\beta^*)$ depends on $p_2$:  $\bar Q_\a^+(\beta^*)\ge 0 \Leftrightarrow p_2\le 1/2$ with equality on the left if and only if equality on the right. For $p_2<1/2$ we have $\bar Q_\a^+(\beta^*)>0$ while (by definition) $\bar Q_\a^+(\bop)=0$. As $\mathbb R\ni \beta\mapsto \bar Q_\a^+(\beta)$ is strictly decreasing, this implies that $\beta^*<\bop$. For the same reason we have $\beta^*>\bop$ when $p_2>1/2$. This shows that the two temperatures are not related, generally.

\section{Energy flow, thermal  channel, stochastic matrix}
\label{sec:eflow}

For $\beta>0$, we define the quantum channel $\Phi_\beta: \mathcal B(\h_A)\rightarrow\mathcal B(\h_A)$  by
\begin{equation}
\label{defPhibeta}
\Phi_\beta(X) = {\rm Tr}_B\big(U (X\otimes
\tau_\beta) U^*\big),\qquad X\in\mathcal B(\h_A).
\end{equation}
The CPTP map $\Phi_\beta$ is a particular example of a reduced thermal quantum channel. Due to \eqref{m1} the system Gibbs state \eqref{AGibbs} is invariant under this channel, 
\begin{align}
\label{invargibbs}
    \Phi_\beta(\gamma_\beta)=\gamma_\beta.
\end{align}
We derive general properties of those channels in Section \ref{secRedQChannels}. We associate to $\Phi_\beta$ a $d\times d$ matrix $G$ with matrix elements
\begin{equation}
\label{defGqCh}
G_{jj'} = \big\langle \psi^\a_j \big| \Phi_\beta\big(|\psi^\a_{j'}\rangle\langle\psi^\a_{j'}|\big)\big| \psi^\a_j\big\rangle.
\end{equation}
The matrix elements have the following interpretation. Given that the system $A$ starts in the energy eigenstate $\psi^A_{j'}$ before interacting with the thermometer $B$, the probability of finding the system (upon a quantum measurement of the observable $H_A$) in the eigenstate $\psi^A_j$ after the interaction is just $G_{jj'}$. Therefore we have $G_{jj'}\ge 0$ and $\sum_{j=1}^d G_{jj'}=1$ for all $j'$, which means that $G$ is a stochastic matrix, which we call the stochastic matrix associated to the quantum channel $\Phi_\beta$. The energy flow $Q_A$ is expressed via the stochastic matrix $G$ as follows.

\begin{prop}
\label{gjjp}
With the notation of Theorem \ref{Qunbounded} we have for $\beta>0$
\begin{equation}
\label{flow}
Q_\a = \vve \cdot G \vvp -\vve\cdot\vvp, 
\end{equation}
where $\vve=(\ve_1,\ldots,\ve_{d})^T$ and $\vvp=(p_1,\ldots,p_d)^T$ are the column vectors in $\mathbb R^{d}$ whose entries are the energies of $H_\a$ and populations of $\rho$, and where $G$ is the $d\times d$ stochastic matrix associated to $\Phi_\beta$, \eqref{defGqCh}. Moreover, the matrix elements of $G$ have the following expression,
\begin{equation}
\label{33}
 G_{j j'} =\sum_{s=\max\{j,j'\}}^{d}\frac{1}{s}\big(e^{-\beta (\ve_s-\ve_{j'})}-e^{-\beta (\ve_{s+1}-\ve_{j'})}\big)\ \ \  \text{with \ }\ e^{-\beta \ve_{d+1}}=0,
\end{equation}
and the eigenvalues of $G$ belong to $(0,1]$, $1$ being a simple eigenvalue.
\end{prop}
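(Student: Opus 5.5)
The proof splits into three parts: the flow identity \eqref{flow}, the explicit formula \eqref{33} for $G$, and the spectral claim about $G$.

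\emph{The flow identity.} Write $Q_\a=\tr(\Phi_\beta(\rho)H_\a)-\tr(\rho H_\a)$, using \eqref{defPhibeta} and the partial trace. Since $H_\a$ is diagonal in the basis $\psi^\a_j$, only the diagonal of $\Phi_\beta(\rho)$ contributes. By the invariance of the off-diagonal operators under $\Phi_\beta$, which follows from \eqref{m1} (the total energy is conserved, so $U$ maps $\psi^\a_{j'}\otimes\psi^\b_{k}$ into the $E$-eigenspace), the off-diagonal part of $\rho$ does not contribute to the diagonal of $\Phi_\beta(\rho)$. Hence
\[
\langle\psi_j^\a|\Phi_\beta(\rho)\psi_j^\a\rangle=\sum_{j'}G_{jj'}p_{j'},
\]
and \eqref{flow} follows at once.

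\emph{The formula for $G$.} Expand $\tau_\beta=\sum_k Z_\b^{-1}e^{-\beta k/N}P_k^\b$, so that
\[
G_{jj'}=\sum_{k,m}Z_\b^{-1}e^{-\beta k/N}\sum_{k',m'}\big|\langle \psi^\a_j\otimes\psi^\b_{k',m'}|U\,\psi^\a_{j'}\otimes\psi^\b_{k,m}\rangle\big|^2 .
\]
Under {\bf (G)}, each transition probability inside a degenerate eigenspace of dimension $g_\ell$ equals $1/g_\ell$, because the rows of a unitary have unit norm. Under {\bf (R)}, the Haar average of $|U_{ab}|^2$ is likewise $1/g_\ell$, which gives the equivalence with {\bf (G)}. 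I then count: the initial state $\psi^\a_{j'}\otimes\psi^\b_{k,m}$ has energy $E=\ve_{j'}+k/N$. Writing $E=\ve_1+n/N$ with $n=l_{j'}+k$, the degeneracy is $g=m_\b\cdot\#\{s: l_s\le n\}$. The number of final states with system index $j$ is $m_\b$ if $l_j\le n$ and $0$ otherwise. The contribution is therefore $\mathbf 1[l_j\le n]/\#\{s:l_s\le n\}$, and the $m_\b$ factors cancel against the $m_\b$ in $Z_\b$. I then group the values of $n$ into the windows $l_s\le n<l_{s+1}$ (with $l_{d+1}=\infty$), on which the count equals $s$. The condition $n\ge l_{j'}$ and the indicator give $s\ge\max\{j,j'\}$. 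Summing the geometric series $(1-e^{-\beta/N})\sum_{n=l_s}^{l_{s+1}-1}e^{-\beta(n-l_{j'})/N}$ produces exactly $e^{-\beta(\ve_s-\ve_{j'})}-e^{-\beta(\ve_{s+1}-\ve_{j'})}$, which is \eqref{33}. In particular the result does not depend on $N$. Substituting into \eqref{flow} and reorganizing reproduces \eqref{qabeta}, which serves as a consistency check with Theorem \ref{Qunbounded}.

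\emph{The spectrum of $G$.} This is the main obstacle. Stochasticity and $G_{jj'}>0$ are immediate from \eqref{33}. Perron–Frobenius then gives that $1$ is a simple eigenvalue and all other eigenvalues have modulus less than $1$. The positivity of the eigenvalues is the real work, and I plan to obtain it by symmetrization. Set $D=\mathrm{diag}(e^{-\beta\ve_j/2})$ and consider $D^{-1}GD$. Its $(j,j')$ entry is
\[
\sum_{s\ge\max\{j,j'\}}\frac1s\big(e^{-\beta(\ve_s-(\ve_j+\ve_{j'})/2)}-\cdots\big),
\]
which is symmetric in $j,j'$; this reflects the detailed balance in \eqref{invargibbs}. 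It can be rewritten as $\sum_s c_s\, u^{(s)}u^{(s)T}$, where $u^{(s)}_j=e^{\beta\ve_j/2}\mathbf 1[j\le s]$ and the coefficients $c_s=\frac1s e^{-\beta\ve_s}-\frac1s e^{-\beta\ve_{s+1}}>0$. This is a sum of positive semidefinite rank-one matrices. The vectors $u^{(s)}$, $s=1,\ldots,d$, are linearly independent, since they form a triangular family, so the sum is positive definite. Hence $G$ has real, strictly positive eigenvalues, and combined with Perron–Frobenius they lie in $(0,1]$ with $1$ simple.
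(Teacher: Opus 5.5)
Your proposal is correct and is essentially the paper's proof. The energy-shell counting (your $n$ plays the role of the paper's $\ell$) yields \eqref{33} in the same way, and your rank-one decomposition $\sum_s c_s u^{(s)}u^{(s)T}$ of the symmetrized matrix is exactly the paper's factorization $\Gamma=TPT^*$, since the columns of $T$ are the vectors $(\mathbf 1[j\le s])_j$ and $P=\mathrm{diag}(c_1,\ldots,c_d)$. The only differences are cosmetic: you conjugate by $\mathrm{diag}(e^{-\beta\ve_j/2})$ to get a symmetric matrix, where the paper shows $G$ is self-adjoint and positive in the inner product $\langle\varphi|D^{-1}\psi\rangle$, and you get the flow identity from invariance of the off-diagonal operators rather than by expanding matrix elements directly.
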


{\bf Proof of Proposition \ref{gjjp}.} 
Using the orthonormal basis $\psi^\a_j\otimes \psi^\b_{km}$ of $\mathcal H_\a\otimes\mathcal H_\b$ and writing the corresponding matrix elements of $U$ as $U_{jkm,j'k'm'}$, we obtain 
\begin{align}
{\rm Tr}&\big(U(\rho\otimes\tau_{\beta}) U^* H_\a\big) =
\sum_{jkm}\, \ve_j\big\langle\psi_j^\a\otimes\psi^\b_{km}| U(\rho\otimes\tau_\beta)U^*|\psi_j^\a\otimes\psi^\b_{km}\big \rangle\nonumber\\
&=\sum_{jkm}\sum_{j' i'}\sum_{k'm'} \ve_j \langle \psi_{j'}^\a|\rho|\psi^\a_{i'}\rangle \frac{e^{-\beta\eta_{k'}}}{Z_\b} U_{jkm,j'k'm'}\overline{U_{jkm,i'k'm'}}.
\label{15}
\end{align}
Due to \eqref{m1} the product of the matrix elements vanishes unless $\ve_j+\eta_k=\ve_{j'}+\eta_{k'}=\ve_{i'}+\eta_{k'}$. The simplicity of the eigenvalues $\ve_j$  forces $j'=i'$. The expression thus only depends on the diagonal density matrix elements $p_{j'}=\langle\psi_{j'}^\a|\rho|\psi_{j'}^\a\rangle$,
\begin{align}
\label{12}
{\rm Tr}\big(U(\rho\otimes\tau_{\beta}) U^* &H_\a\big)  =\sum_{jj'}\sum_{kk'mm'} \ve_j p_{j'} \frac{e^{-\beta\eta_{k'}}}{Z_\b} |U_{jkm,j'k'm'}|^2\nonumber\\
& = \sum_\ell \sum_{jj'kk'mm'} \chi\big(\ve_j+\eta_k=\ve_{j'}+\eta_{k'}=E_\ell\big)\ve_j p_{j'} \frac{e^{-\beta\eta_{k'}}}{Z_\b} |U^{(\ell)}_{jkm,j'k'm'}|^2.
\end{align}
Next we calculate the modulus square of the density matrix elements. Consider first the assumption {\bf (G)}. For each $\ell$, the transition probabilities between any eigenstates $\psi^\a_j\otimes \psi^\b_{km}$ and $\psi^\a_{j'}\otimes \psi^\b_{k'm'}$ satisfy
\begin{equation}
\label{m2}
\big|U^{(\ell)}_{jkm, j'k'm'}\big|^2\equiv \big|\langle \psi^\a_j\otimes\psi^\b_{km}, U^{(\ell)} \psi^\a_{j'}\otimes \psi^{\b}_{k'm'}\rangle\big|^2 = 1/g_\ell,
\end{equation}
where $g_\ell=\dim\h_{\a\b}^{(\ell)}$ is the size of the matrix. The value of the constant $1/g_\ell$ is determined by the condition that $U^{(\ell)}$ is unitary on $\h_{\a\b}^{(\ell)}$. Indeed, the column $jkm$ must be a vector of norm one, so  $\sum_{j'k'm'} |U^{(\ell)}_{jkm,j'k'm'}|^2=1$. For constant $|U^{(\ell)}_{jkm,j'k'm'}|^2=c$ the sum is $c g_\ell$, so $c=1/g_\ell$. 

Consider now the assumption {\bf (R)}.  For a random $g\times g$ unitary matrix $M(\omega)$ distributed according to the Haar measure, one has $\mathbb E[|M_{kl}(\omega)|^2]=1/g$. Therefore 
$\mathbb E[ |U^{(\ell)}_{jkm, j'k'm'}\big|^2] = 1/g_\ell$.

In either way | understanding that we take the average in the random case | we have from \eqref{12},
\begin{align}
\label{14}
{\rm Tr}\big(U(\rho\otimes\tau_{\beta}) U^* &H_\a\big)  = \sum_\ell \frac{1}{g_\ell} \sum_{jj'kk'} \chi\big(\ve_j+\eta_k=\ve_{j'}+\eta_{k'}=E_\ell\big)\ve_j p_{j'} \frac{m^2_\b e^{-\beta\eta_{k'}}}{Z_\b}= \vec\ve\cdot G\vec p,
\end{align}
where the factor $m^2_\b$ is due to the sums over $m,m'$ and where $G$ is the $d\times d$ matrix with elements
\begin{equation}
\label{matG}
{G}_{jj'} = \sum_\ell \frac{1}{g_\ell} \sum_{kk'} \chi\big(\ve_j+\eta_k=\ve_{j'}+\eta_{k'}=E_\ell\big)   \frac{m_\b^2e^{-\beta \eta_{k'}}}{Z_{\b}}.
\end{equation}
Here, $\chi$ is the characteristic function taking the value $1$ if the condition in the argument is satisfied and $0$ else. This shows \eqref{flow} for $G$ as given in \eqref{matG}. 

If in the above argument instead of taking the trace over $\h_\a$  (which is contained in the full trace $\rm Tr$) in \eqref{14} we consider $\langle \psi^\a_j| {\rm Tr}_\b\big(U(\rho\otimes\tau_{\beta}) U^* \big)|\psi^\a_j\rangle$, and we take $\rho=|\psi^\a_{j'}\rangle\langle\psi^\a_{j'}|$, then we obtain precisely that $\langle\psi^\a_j|\Phi_\beta(|\psi^\a_{j'}\rangle\langle \psi^\a_{j'})|\psi^\a_j\rangle
$ is given by \eqref{matG}, so \eqref{defGqCh} holds with $G_{jj'}$ given by \eqref{matG}.

Our next task is to see that the expressions \eqref{33} and \eqref{matG} are the same. We first observe that $G_{jj'}$ in \eqref{matG} is independent of the degeneracy $m_\b$: Both $g_\ell$ and $Z_\b$ are proportional to $m_\b$ and so they compensate the $m^2_\b$ in the numerator of \eqref{matG}. We conclude that we can set $m_\b=1$ in the following analysis of $G_{jj'}$. 
Introducing the integers $\ell_j$ defined by
\begin{equation}
\label{ellj}
\ve_s=\ve_1+\frac{\ell_s}{N},
\end{equation}
with $\ell_1=0$ and the convention that $\ell_{d+1}=\infty$, we write the sum over $\ell$ in \eqref{matG} as $\sum_\ell=\sum_{s=1}^d\sum_{\ell_s\leq \ell< \ell_{s+1}}$. Since in this sum (for $j,j'$ fixed) we have the constraint $E_\ell\ge \max\{\ve_j,\ve_{j'}\}$, we only need to sum over $s\ge  \max\{j,j'\}$. Also, for $\ell_s\leq \ell < \ell_{s+1}$, we have $g_\ell=s$, and $k, k'$ are uniquely determined by $E_\ell=\ve_j+\eta_k=\ve_{j'}+\eta_{k'}$, so we obtain from \eqref{matG},
\begin{align}
\label{21.1}
    G_{j j'}=\sum_{s=\max\{j,j'\}}^{d}\sum_{\ell_s\leq \ell < \ell_{s+1}}\frac{1}{s}\frac{e^{-\beta (E_\ell-\ve_{j'})}}{Z_{\b}}.
\end{align}
Now $E_\ell = \ve_1+\ell/N$ and (with the convention $\ve_{d+1}=\infty$)
\begin{equation}
\label{96}
\sum_{\ell_s\leq \ell < \ell_{s+1}} e^{-\beta E_\ell} = e^{-\beta \ve_1} \sum_{\ell_s\leq \ell < \ell_{s+1}} e^{-\beta\ell/N} =  e^{-\beta \ve_1} \frac{e^{-\beta\ell_s/N}-e^{-\beta \ell_{s+1}/N}}{1-e^{-\beta/N}} = \frac{e^{-\beta\ve_s} - e^{-\beta\ve_{s+1}}}{1-e^{-\beta/N}}.
\end{equation}
Using that $Z_B = (1-e^{-\beta/N})^{-1}$ and the expression \eqref{96}, we see that \eqref{21.1} equals \eqref{33}. This shows \eqref{33}. To complete the proof of  Proposition \ref{gjjp} we show that all eigenvalues of $G$ are in $(0,1]$.

The expression \eqref{33} shows that the stochastic matrix $G$ satisfies
$G_{jj'} >0$, $\forall j,j'$. The Perron-Frobenius theorem implies that $G$ admits a unique invariant vector and this vector has  strictly positive entries (up to normalization) and moreover, all other eigenvalues (besides the simple eigenvalue $1$) have modulus strictly smaller than $1$. The equation \eqref{invargibbs} shows the invariant vector is $v_\beta\propto (e^{-\beta\ve_1},\ldots,e^{-\beta\ve_d})$  (see also Lemma \ref{c-lem}). Next we show that all eigenvalues of $G$ are strictly positive. 
We check on \eqref{33} that $G^*D^{-1}= D^{-1}G$ holds, where $D=\text{diag}(e^{-\beta\ve_1},\ldots,e^{-\beta \ve_d})$. This is known as the detailed balance relation \cite{N} (see also Lemma \ref{lem3}(b) below). It follows that $\langle \varphi |D^{-1}G\psi\rangle=\langle G\varphi |D^{-1}\psi\rangle$ for all $\phi,\psi\in\h_A$, that is, $G$ is selfadjoint on $\h_\a$ with inner product $\langle \varphi|\psi\rangle_D:=\langle \varphi|D^{-1}\psi\rangle$. To show that $G$ has positive spectrum it thus suffices to show that $\langle\varphi, D^{-1}G\varphi\rangle>0$, $\forall\varphi\in\h_\a$. Now \eqref{33} gives 
\begin{align}\label{yyy}
G = \Gamma D^{-1},\qquad \text{where}\qquad \Gamma_{jj'} = \sum_{s=\max\{j,j'\}}^{d}\frac{1}{s}\big(e^{-\beta \ve_s}-e^{-\beta \ve_{s+1}}\big)\ \ \  \text{with \ }\ e^{-\beta \ve_{d+1}}=0.
\end{align}
Thus we have $D^{-1}G=D^{-1}\Gamma D^{-1}$ and hence the strict positive definiteness of $D^{-1}G$ and $\Gamma$ are equivalent. For $j=1,\ldots,d$, set (with $e^{-\beta \ve_{d+1}}=0$),
\begin{align}
\label{petitgj}
g_j=\sum_{s=j}^{d}\frac{1}{s}\big(e^{-\beta \ve_s}-e^{-\beta \ve_{s+1}}\big).
\end{align}
Then we have $g_1>g_2>\dots >g_d>0$ and
\begin{align}
\Gamma=\begin{pmatrix}
g_1 & g_2& g_3 & \dots & g_d \\
g_2 & g_2 & g_3 & \cdots & g_d\\
g_3 & g_3 & g_3 & \cdots & g_d \\
\vdots & \vdots & \vdots & \ddots & \vdots\\
g_d & g_d & g_d & \dots & g_d
\end{pmatrix} = TPT^*, \qquad T=\begin{pmatrix}
1 & 1& 1 & \dots & 1 \\
0 & 1 & 1 & \cdots & 1\\
0 & 0 & 1 & \cdots & 1 \\
\vdots & \vdots & \vdots & \ddots & \vdots\\
0 & 0 & 0 & \dots & 1
\end{pmatrix},
\end{align}
and where $P=\text{diag}(g_1-g_2, g_2-g_3, \dots, g_{d-1}-g_d, g_d)$ is a strictly positive definite matrix. $T$ is invertible so  $\Gamma$ is a strictly positive definite matrix. This shows that the eigenvalues of $G$ are strictly positive, thus the spectrum of $G$ is contained in $(0,1]$. The proof of Proposition \ref{gjjp} is complete.\hfill\qed

\bigskip

The stochastic matrix $G$ associated to the quantum channel $\Phi_\beta$ depends on the temperature $\beta$, so we write $G=G(\beta)$. We note the following properties: 
\medskip

(i) The matrix $G(\beta)$ is independent of the inverse gap size $N$ and of the multiplicity $m_\b\geq 1$ of the spectrum of $H_\b$ (so we may take $m_\b=1$), and is invariant under translations of $H_\a$ by $\alpha \bbbone$.

(ii) The matrix valued function $G(\beta)$ defined by \eqref{33}, admits an analytic extension to $\beta \in\mathbb C$, which we denote by the same symbol. 

(iii) In particular, $G(0)=\frac{1}{d}{\bf 1}{\bf 1}^T$, where $\bf 1$ is the vector with all components equal to one. The spectrum of $G(0)$ is thus $\{0,1\}$, where the multiplicity of $0$ equals $d-1$. Then we have $G\vec{p}=\frac{1}{d}{\bf 1}$ for any probability vector $\vec p$. Hence $G(0)\vec p$ corresponds to the infinite temperature Gibbs state $\bbbone/d$, for all $\vec p$. Moreover, for $\beta\rightarrow\infty$ we get
\begin{align}\label{ginfini}
G(\infty)=\begin{pmatrix}
1 & 1/2 & 1/3 & \dots & 1/d\\
        0 & 1/2 & 1/3 & \dots & 1/d\\
        0 & 0 & 1/3 & \dots & 1/d\\
        \vdots & \vdots & \ddots &\ddots & \vdots \\
        0 & 0 &\dots   & 0 & 1/d
\end{pmatrix},
\end{align}
with spectrum $\{1, 1/2, \dots, 1/d\}$, and invariant probability vector $\vec p =(1, 0, \dots, 0)^T$ corresponding to the ground state of $H_\a$.

(iv) When $\beta<0$ with $|\beta|$ large enough, we get that $G_{j j'}(-|\beta|)<0$, for $\max\{j, j'\}<d$. This shows that if $\Phi_{-|\beta|}$ has a well defined extension to $\beta< 0$, this extension will not be a positive map (recall \eqref{defGqCh}). In the same vein, for any $\beta<0$ we have $G_{j\, d-1}(-|\beta|)<0$ for $\ve_d-\ve_{d-1}$ large enough, which implies that $\Phi_{-|\beta|}$, if well defined, is not positive.

\section{Reduced thermal quantum channels}
\label{secRedQChannels}

Consider a general bipartite system with Hilbert space $\h_{\rm tot}=\h\otimes\h_\rr$ where  $\dim\h=d<\infty$ and $\dim\h_\rr\le \infty$. Let $H$ and $H_\rr$ be self-adjoint, bounded below Hamiltonians on $\h$ and $\h_\rr$, given by
\begin{equation}
\label{hamilts}
H=\sum_j \ve_j|\psi_j\rangle\langle\psi_j|,\qquad H_\rr=\sum_k \eta_k |\chi_k\rangle \langle \chi_k|,
\end{equation}
with finitely degenerate eigenvalues, and where the eigenvectors are orthonormal bases of the respective spaces.  The equilibrium states at inverse temperature $\beta\in \mathbb R$ are 
$$
\gamma_\beta = \frac{e^{-\beta H}}{Z_\beta},\qquad \tau_\beta = \frac{e^{-\beta H_\rr}}{Z_{\beta, \rr}}.
$$
In the infinite dimensional case we assume that $e^{-\beta H_\rr}$ has a finite trace, which necessitates $\beta>0$ since $H_\rr$ is bounded below. Set 
$$
H_{\rm tot}=H+H_\rr
$$ 
where we omit trivial tensor factors in the notation (and don't fret about the simple domain considerations). Let $V$ be a unitary on $\h_{\rm tot}$ satisfying
\begin{align}\label{comV}
    [H_{\rm tot},V]=0.
\end{align}
We define the reduced thermal quantum channel $\Phi: \mathcal B(\h)\rightarrow\mathcal B(\h)$ by
\begin{equation}
\label{defphibeta'}
    \Phi (X)=\tr_\rr\big(V(X\otimes \tau_\beta)V^*\big),
\end{equation}
where $\tr_\rr$ is the partial trace over $\h_\rr$. The map $\Phi$ is CPTP (completely positive, trace preserving)  \cite{H, EH-K, Kr, S, NMlectnotes} and has $\gamma_\beta$ as an invariant vector. Complete positivity implies directly that  for all $X\in\mathcal B(\h)$,
\begin{equation}
\label{14.1}
\Phi(X^*) = \big[\Phi(X)\big]^*.
\end{equation}
The spectrum of the map $\Phi$ is thus invariant under complex conjugation: the eigenvalues of $\Phi$ are either real or they appear in complex conjugate pairs. 
Moreover, again by virtue of $\Phi$ being CPTP, its spectrum $\sigma(\Phi)$ satisfies
\begin{align}
    \sigma(\Phi)\subset \{z\in \mathbb C \ | \ |z|\leq 1\} \quad \text{and} \quad 1\in \sigma(\Phi).
\end{align}

\medskip

Denote the sets of operators on $\h$ which are diagonal and off diagonal relative to the eigenbasis $\{\psi_j\}_{j=1}^d$ of $H$, by
\begin{align}
\mathcal D &=\big\{X\in \mathcal B(\h) \ \text{s.t. $\langle \psi_j|X\psi_k \rangle =0$ for $j\neq k$}\big\},\nonumber\\
\mathcal O &=\big\{X\in \mathcal B({\cal H}) \ \text{s.t. $\langle\psi_j|X\psi_j\rangle =0$ for all $j$}\big\}.
\end{align}
These two sets are orthogonal complements of each other when considering $\mathcal B(\h)$ as a Hilbert space with inner product $\langle X|Y\rangle_{\mathcal B}=\tr(X^*Y)$, and we have the orthogonal decomposition
\begin{equation}
\label{bhdec}
\mathcal B(\h) = \mathcal D\oplus \mathcal O.
\end{equation}

\begin{lem}
\label{phiprop}
Suppose that all eigenvalues $\ve_j$ of $H$ are simple. Then
\begin{itemize}
\item[\bf (a)] $\Phi$ leaves both  $\mathcal D$ and $\mathcal O$ invariant, that is, $\Phi$ is reduced by (block-diagonal in) the decomposition \eqref{bhdec}.

\item[\bf (b)] If all nonzero Bohr energies $\ve_j-\ve_k$ of $H$ are distinct, then $|\psi_j\rangle\langle\psi_k|$ is an eigenvector of $\Phi$ whenever $j\neq k$, with eigenvalue $z_{j,k}$ such that $|z_{j,k}|\le 1$. Moreover, $z_{j,k}=\overline{z_{k,j}}$.
\end{itemize}
\end{lem}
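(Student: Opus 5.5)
The approach is to compute the matrix elements of $\Phi(|\psi_{j'}\rangle\langle\psi_{i'}|)$ directly in the product basis $\psi_j\otimes\chi_k$ and use the energy conservation \eqref{comV} to see which of them can be nonzero. Write $V_{jk,j'k'}=\langle\psi_j\otimes\chi_k|V\,\psi_{j'}\otimes\chi_{k'}\rangle$. Since $V$ commutes with $H_{\rm tot}$, it preserves each eigenspace of $H_{\rm tot}$. Hence $V_{jk,j'k'}=0$ unless $\ve_j+\eta_k=\ve_{j'}+\eta_{k'}$. Because $\tau_\beta$ is diagonal in the basis $\{\chi_k\}$, we get
$$
\langle\psi_j|\Phi(|\psi_{j'}\rangle\langle\psi_{i'}|)|\psi_i\rangle=\sum_{k,k'}\frac{e^{-\beta\eta_{k'}}}{Z_{\beta,\rr}}\,V_{jk,j'k'}\,\overline{V_{ik,i'k'}}.
$$
Nonvanishing terms require both $\ve_j+\eta_k=\ve_{j'}+\eta_{k'}$ and $\ve_i+\eta_k=\ve_{i'}+\eta_{k'}$. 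Subtracting the two conditions gives the key constraint
$$
\ve_j-\ve_i=\ve_{j'}-\ve_{i'}.
$$
In the infinite-dimensional case I would justify exchanging sums and traces using trace-class convergence of $\tau_\beta$; this is routine.

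For (a), take $i'=j'$ (a diagonal input). The constraint then reads $\ve_j=\ve_i$, and simplicity of the spectrum forces $i=j$, so $\Phi(\mathcal D)\subset\mathcal D$. Conversely, take $i'\neq j'$ and look at a diagonal output element, $i=j$. The constraint becomes $0=\ve_{j'}-\ve_{i'}$, which simplicity again rules out. So every diagonal element of $\Phi(|\psi_{j'}\rangle\langle\psi_{i'}|)$ vanishes, and by linearity $\Phi(\mathcal O)\subset\mathcal O$.

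For (b), fix $j'\neq i'$. An output element $(j,i)$ can be nonzero only if $\ve_j-\ve_i=\ve_{j'}-\ve_{i'}\neq0$. Since the nonzero Bohr energies are distinct, this forces $(j,i)=(j',i')$. Hence $\Phi(|\psi_{j'}\rangle\langle\psi_{i'}|)=z_{j',i'}|\psi_{j'}\rangle\langle\psi_{i'}|$, where $z_{j',i'}$ is the matrix element displayed above with $(j,i)=(j',i')$. The bound $|z_{j,k}|\le1$ follows because the spectrum of the CPTP map $\Phi$ lies in the closed unit disk. Alternatively, by Cauchy–Schwarz in $k,k'$ with weights $e^{-\beta\eta_{k'}}/Z_{\beta,\rr}$, together with $\sum_k|V_{jk,jk'}|^2\le1$, one gets $|z_{j,k}|\le\sum_{k'}e^{-\beta\eta_{k'}}/Z_{\beta,\rr}=1$. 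Finally, apply \eqref{14.1} to $X=|\psi_j\rangle\langle\psi_k|$: since $X^*=|\psi_k\rangle\langle\psi_j|$, we get $z_{k,j}X^*=\Phi(X^*)=(z_{j,k}X)^*=\overline{z_{j,k}}X^*$, so $z_{j,k}=\overline{z_{k,j}}$.

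The only genuine obstacle is careful bookkeeping of the energy constraint, especially making sure that degeneracies of $H_{\rm tot}$ coming from $H_\rr$ do not break the argument. They do not, because only the differences of system energies enter the constraint.
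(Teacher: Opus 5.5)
Your proof is correct, and it runs on the same mechanism as the paper's: energy conservation forces $\Phi$ to preserve each Bohr-frequency sector. You implement it differently, though.

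The paper never writes out matrix elements. It proves the covariance $\Phi\circ\mathcal U_t=\mathcal U_t\circ\Phi$ with $\mathcal U_t(X)=e^{-itH}Xe^{itH}$, using $[V,H_{\rm tot}]=0$ and $[\tau_\beta,H_\rr]=0$. From this it deduces that $\Phi$ commutes with every spectral projection $\Pi_\Delta$ of $\mathcal L=[H,\cdot\,]$. Part (a) then follows from $\mathcal D={\rm Ran}\,\Pi_0$ and $\mathcal O={\rm Ran}(\bbbone-\Pi_0)$. Part (b) follows because each ${\rm Ran}\,\Pi_\Delta$ with $\Delta\neq0$ is one-dimensional.

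Your selection rule $\ve_j-\ve_i=\ve_{j'}-\ve_{i'}$ is exactly the statement that $\Phi$ preserves each ${\rm Ran}\,\Pi_\Delta$, written in coordinates. The trade-offs:
\begin{itemize}
\item The paper's version is basis-free. It makes visible that the sector structure holds without simplicity, which is used only to identify $\Pi_0$ with $\mathcal D$.
\item Your version is more elementary and needs no spectral calculus for $\mathcal L$.
\item Your version also yields the explicit formula $z_{j,i}=\sum_{k,k'}Z_{\beta,\rr}^{-1}e^{-\beta\eta_{k'}}V_{jk,jk'}\overline{V_{ik,ik'}}$, which the paper only writes down later, in the proof of Lemma~\ref{c-lem}.
\end{itemize}
Your alternative Cauchy--Schwarz bound is essentially the estimate \eqref{30} used there. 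In its sharper form $|z_{k,l}|^2\le G_{kk}G_{ll}$, that estimate is what upgrades $|z|\le1$ to $|z|<1$.

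A minor notational point: you use $k$ both as the bath index and as a system index in $z_{j,k}$.
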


\noindent
{\bf Proof of Lemma \ref{phiprop}.} 
We start the proof by noticing that the map $\,{\cal U}_t:\rm  {\cal B}({\cal H})\rightarrow {\cal B}({\cal H})$ given by $\,{\cal U}_t(X)=e^{-itH}Xe^{itH}$ satisfies $\Phi \circ {\cal U}_t={\cal U}_t\circ \Phi$ for all $t\in \mathbb R$. We provide the argument for the convenience of the reader (see also the review article \cite{Lo}). For any $X\in {\cal B}({\cal H})$ we have
\begin{align}
\Phi ({\cal U}_t(X))&=\tr_\rr(V(e^{-itH}Xe^{itH}\otimes \tau_\beta)V^*)\nonumber\\
&=\tr_\rr(V (e^{-itH}Xe^{itH}\otimes e^{-itH_\rr}\tau_\beta e^{itH_\rr} )V^*)\nonumber\\
&=\tr_\rr(V e^{-itH_{\rm tot}}(X\otimes \tau_\beta) e^{itH_{\rm tot}} V^*)\nonumber\\
&=\tr_\rr(e^{-itH_{\rm tot}}V (X\otimes \tau_\beta)V^* e^{itH_{\rm tot}})\nonumber\\
&=e^{-itH} \tr_\rr(e^{-itH_\rr} V (X\otimes \tau_\beta)V^*e^{itH_\rr}) e^{itH}\nonumber\\
&=e^{-itH} \tr_\rr( V (X\otimes \tau_\beta)V^*) e^{itH}
\nonumber\\
&={\cal U}_t(\Phi (X)).
\label{16}
\end{align}
$\mathcal U_t$ is a unitary group on $\mathcal B$ and its selfadjoint generator $\mathcal L$ is the commutator $\mathcal L = [H,\cdot\,]$, that is, $\mathcal U_t=e^{-it\mathcal L}$. The eigenvalues of $\mathcal L$ are the Bohr energies $\Delta=\ve_i-\ve_j$. They have multiplicity $m(\Delta)=|\{(i,j)\ :\ \ve_i-\ve_j=\Delta\}|$. For every $(i,j)$ the element  $X_{ij}:=|\psi_i\rangle\langle\psi_j|\in\mathcal B(\h_\a)$ is a normalized eigenvector of $\mathcal L$ with eigenvalue $\ve_i-\ve_j$. The set $\{X_{ij}\ : \ \ve_i-\ve_j=\Delta\}$ is an orthonormal basis for the eigenspace of $\mathcal L$ associated to the eigenvalue $\Delta$. Therefore the eigenprojection of $\mathcal L$ associated to $\Delta$ is given by
$$
\Pi_\Delta = \sum_{i,j\, :\, \ve_i-\ve_j=\Delta} |X_{ij}\rangle_{\mathcal B}\langle X_{ij}|,
$$
where for $X,Y\in\mathcal B(\h)$ we define $|X\rangle_{\mathcal B}\langle Y| : \mathcal B(\h)\rightarrow\mathcal B(\h)$ by the action $|X\rangle_{\mathcal B}\langle Y| (Z)=X\,\tr(Y^*Z)$.
The spectral decomposition of $\mathcal L$ is then,
$$
\mathcal L = \sum_\Delta \Delta \, \Pi_\Delta, 
$$
where the sum runs over all (distinct) Bohr energies. By \eqref{16}, $\Phi$ and $\mathcal L$ commute, which is equivalent to saying that $\Phi$ and $\Pi_\Delta$ commute for all $\Delta$. In other words, $\Phi$ leaves each eigenspace of $\mathcal L$ invariant.

So far we did not use that the $\ve_j$ are all distinct. Now if they are,  then the eigenspace associated to $\Delta=0$ is spanned by $\{X_{ii}\}_{i=1}^d$, which is precisely $\mathcal D$. This shows that $\Phi$ leaves $\mathcal D$ invariant. Next set $Q_0=\sum_{\Delta \neq 0}\Pi_\Delta$, so that $Q_0=\bbbone-\Pi_0=(\Pi_0)^\perp$ is the projection onto the orthogonal complement of $\mathcal D$, which is exactly $\mathcal O$. As $\Phi$ commutes with each $\Pi_\Delta$ it commutes with $Q_0$. Thus $\Phi$ leaves $\mathcal O$ invariant. This shows (a).

Now suppose that the nonzero eigenvalues $\Delta$ of $\mathcal L$ are all nondegenerate, then the associated spectral subspaces have dimension one and are spanned by a single $X_{ij}=|\psi_i\rangle\langle\psi_j|$. Hence $\Phi$ maps each $|\psi_i\rangle\langle\psi_j|$ into a multiple of itself ($i\neq j$).

Let $z_{i,j}$ be the eigenvalue associated to $|\psi_i\rangle\langle\psi_j|$, that is, $\Phi(|\psi_i\rangle\langle\psi_j|) = z_{i,j}|\psi_i\rangle\langle\psi_j|$. Then $|z_{i,j}|\leq 1$ and $z_{i,j}=\overline{z_{j,i}}$ follows from \eqref{14.1}.
This completes the proof of the Lemma \ref{phiprop}.   \hfill\qed

\medskip

The restriction of the channel $\Phi$ to diagonal operators has the following physical meaning. If the system starts out in the eigenstate $\psi_{j'}$ before interacting with $\rr$, then the probability of finding the system (upon a quantum measurement of the observable $H$) in the eigenstate $\psi_j$ after the interaction with $\rr$ is 
\begin{equation}
\label{Gabs}
G_{jj'}=\big[\tr_\rr(V(\ket{\psi_{j'}}\bra{\psi_{j'}}\otimes \tau_{\beta})V^*)\big]_{jj} = \big[\Phi(\ket{\psi_{j'}}\bra{\psi_{j'}})\big]_{jj},
\end{equation}
where we use the notation  $[\cdots]_{kl}=\langle\psi_k|\cdots|\psi_l\rangle$. The relation \eqref{Gabs} defines a $d\times d$ matrix $G$ whose $(j,j')$ entries are probabilities for the transition $\psi_{j'}\rightarrow \psi_j$ induced by the contact with $\rr$. The matrix represents the action of $\Phi$ on $\mathcal D$ which is identified as $\mathcal D \simeq \{ v\in \mathbb C^d\}$. For any given $j'$ the probabilities for the transitions $j'\rightarrow j$ add up to one when we sum over $j$, so
\begin{equation}
\label{stoch1}
\sum_{j=1}^{d}G_{jj'}=1,\quad \forall j'.
\end{equation}
$G$ is the transition matrix of a Markov chain, or a stochastic matrix | characterized by having nonnegative entries such that the sum of each column adds up to $1$, see also \cite{Lo}.  The constant vector $v_0=(1,1,\ldots,1)^T\in{\mathbb C}^d$ represents the equilibrium state $\gamma_\beta$ at $\beta=0$. As $(G^*)_{j'j}=G_{jj'}$, \eqref{stoch1} means that 
\begin{equation}
\label{ev0}
G^*v_0=v_0.
\end{equation}
By linearity of \eqref{Gabs} in the system density matrix and the fact that $V$ commutes with $H+H_\rr$ we obtain
\begin{align}
\sum_{j'=1}^{d}G_{jj'}\frac{e^{-\beta \ve_{j'}}}{Z_\beta} & =\big[ \tr_\rr(V(\gamma_{\beta}\otimes \tau_{\beta})V^*)\big]_{jj}=\frac{1}{Z_\beta Z_{\rr,\beta}}\big[\tr_\rr(V e^{-\beta (H+H_\rr)} V^*)\big]_{jj}\nonumber\\
&=  \frac{1}{Z_\beta Z_{\rr,\beta}}\big[\tr_\rr(e^{-\beta (H+H_\rr)})\big]_{jj}=\frac{e^{-\beta \ve_{j}}}{Z_\beta},
\end{align}
which means that $G$ leaves the system Gibbs state $\rho_\beta$ invariant, in the sense that 
\begin{equation}
\label{ev1}
Gv_\beta=v_\beta
\end{equation}
where $v_\beta$ is the vector with $j$th entry equal to $Z_\beta^{-1}e^{-\beta \ve_j}$. (This property is well known \cite{Lo}.) 

Let us consider the adjoint $\Phi^*$ of $\Phi$, where $\Phi$ is considered as a linear operator on $\mathcal B(\h)$ with inner product $\langle X,Y\rangle_{\mathcal B} = {\rm Tr}(X^*Y)$. From definition \eqref{Gabs}, 
\begin{align}\label{Gadj}
   (G^*)_{kl} &= \overline{G_{lk}}= \overline{\big\langle \ket{\psi_{l}}\bra{\psi_{l}} , \Phi(\ket{\psi_{k}}\bra{\psi_{k}})\big \rangle_{\mathcal B}}= \big\langle \ket{\psi_{k}}\bra{\psi_{k}} , \Phi^*(\ket{\psi_{l}}\bra{\psi_{l}})\big \rangle_{\mathcal B}\nonumber\\
   &=\big[\Phi^*(\ket{\psi_{l}}\bra{\psi_{l}})\big]_{kk}.
\end{align}
Replacing $V$ by $V^*$ in the definition of $\Phi$, \eqref{defphibeta'}, gives a second CPTP map on $\Psi$ on $\mathcal B(\h)$, defined by 
\begin{equation}
\label{defphibeta''}
    \Psi (X)=\tr_\rr\big(V^*(X\otimes \tau_\beta)V\big),
\end{equation}
for which all the above considerations hold ($\Psi$ leaves $\gamma_\beta$ invariant, Lemma \ref{phiprop} holds for $\Psi$, etc.). 
We have for any $X,Y\in\mathcal B(\h)$,
\begin{align*}
\langle X,\Phi(Y)\rangle_{\mathcal B} &= \tr_{\h_{\rm tot}} \big[ (X^*\otimes \bbbone_\rr) V(Y\otimes\tau_\beta)V^*\big]\nonumber\\
&=\tr_{\h_{\rm tot}} \big[V^*(X^*\otimes \bbbone_\rr) V (e^{-\beta H}\otimes\tau_\beta) (e^{\beta H}Y\otimes\bbbone_\rr)\big]\nonumber\\
&=\tr_{\h_{\rm tot}} \big[V^*((e^{-\beta H}X)^*\otimes \tau_\beta) V (e^{\beta H}Y\otimes\bbbone_\rr)\big]\nonumber\\
&=\tr_{\h} \big( [\Psi(e^{-\beta H}X)]^* e^{\beta H}Y\big)\nonumber\\
&=\langle e^{\beta H}\Psi(e^{-\beta H}X), Y\rangle_{\mathcal B}.
\end{align*}
We conclude that 
\begin{equation}
\label{phistarpsi}
\Phi^*(X) = e^{\beta H}\Psi\big(e^{-\beta H}X\big),\qquad \forall X\in\mathcal B(\h).
\end{equation}
It is then clear that $\Phi^*$ and $\Psi$ have the same spectrum. 
Just as with $\Phi$ and $G$, associated to the channel $\Psi$ is the stochastic matrix F with matrix elements ({\it c.f.}~\eqref{Gabs}) 
\begin{equation}
\label{Gabs''}
F_{jj'}=\big[\tr_\rr(V^*(\ket{\psi_{j'}}\bra{\psi_{j'}}\otimes \tau_{\beta})V)\big]_{jj} = \big[\Psi(\ket{\psi_{j'}}\bra{\psi_{j'}})\big]_{jj}.
\end{equation}
Then, relations \eqref{phistarpsi} with $X=\ket{\psi_{l}}\bra{\psi_{l}}$ and \eqref{Gadj} yield $e^{-\beta \ve_k} (G^*)_{kl} = F_{kl}e^{-\beta \ve_l}$, or in matrix form,
\begin{equation}
\label{23}
G^*= D^{-1}F D, \qquad \text{where} \qquad D=\text{diag}(e^{-\beta\ve_1},\ldots,e^{-\beta \ve_d}).
\end{equation}

By Lemma \ref{phiprop}, $\Phi$ leaves both $\mathcal D$ and $\mathcal O$ invariant. We denote the restrictions by $\Phi|_{\mathcal D}$ and $\Phi|_{\mathcal O}$. The eigenvector $v_\beta\propto (e^{-\beta\ve_1},\ldots,e^{-\beta\ve_d})$ of $G$ with eigenvalue $1$ (c.f. \eqref{ev1}) has strictly positive entries. Thus, if $G_{jj'}>0$ for all $j,j'$, then it follows from the Perron-Frobenius theorem that $v_\beta$ is the Perron-Frobenius eigenvector (the unique eigenvector with  all strictly positive entries) and $1$ is the Perron-Frobenius eigenvalue, having the property that it is a simple eigenvalue and all other eigenvalues have modulus smaller than $1$. The same applies to $\Psi$ and $F$. This shows the assertion (a) of the following result.

\begin{lem}
\label{c-lem}
Suppose that all eigenvalues $\ve_j$ of $H$ are simple and that $G_{jj'}>0$ for all $j,j'$ (respectively $F_{jj'}>0$ for all $j,j'$). Then
\begin{itemize}
\item[{\rm \bf (a)}]
$1$ is a simple eigenvalue of $\Phi|_{\mathcal D}$ (respectively $\Psi|_{\mathcal D}$) with associated eigenvector $v_\beta$ (representing $\rho_\beta$ as a vector in $\mathbb C^d$). All other eigenvalues have modulus less than $1$. 

\item[{\rm \bf (b)}] If in addition all nonzero Bohr energies $\ve_j-\ve_k$ of $H$ are distinct,  then the spectrum of $\Phi|_{\mathcal O}$ (respectively that of $\Psi|_{\mathcal O}$) lies in the open complex unit disk.
\end{itemize}
\end{lem}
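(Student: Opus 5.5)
Part (a) was already settled just before the statement by the Perron--Frobenius argument, so the plan concerns (b). By Lemma \ref{phiprop}(b), when the nonzero Bohr energies are distinct, each $X_{jk}=|\psi_j\rangle\langle\psi_k|$ with $j\neq k$ is an eigenvector of $\Phi$ with eigenvalue $z_{j,k}$. These $X_{jk}$ form a basis of $\mathcal O$, so $\sigma(\Phi|_{\mathcal O})=\{z_{j,k}\}_{j\neq k}$, and we already know $|z_{j,k}|\le 1$. The task is therefore to show $|z_{j,k}|<1$ strictly, and I would do this by bounding $|z_{j,k}|$ by entries of $G$.

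First, I would write $z_{j,k}=\langle\psi_j|\Phi(X_{jk})|\psi_k\rangle$ explicitly. Expand $\tau_\beta=\sum_n t_n|\chi_n\rangle\langle\chi_n|$ with $t_n\ge 0$ and $\sum_n t_n=1$, and set $\phi_{j,n}=V(\psi_j\otimes\chi_n)$. Then
\begin{align*}
z_{j,k}=\sum_n t_n \sum_m \langle\psi_j\otimes\chi_m|\phi_{j,n}\rangle\,\overline{\langle\psi_k\otimes\chi_m|\phi_{k,n}\rangle}.
\end{align*}
By Cauchy--Schwarz in the index $m$, and then in the index $n$ with weights $t_n$,
\begin{align*}
|z_{j,k}|\le \sum_n t_n\, a_{j,n}^{1/2}a_{k,n}^{1/2}\le \Big(\sum_n t_n a_{j,n}\Big)^{1/2}\Big(\sum_n t_n a_{k,n}\Big)^{1/2}=G_{jj}^{1/2}G_{kk}^{1/2},
\end{align*}
where $a_{j,n}=\sum_m|\langle\psi_j\otimes\chi_m|\phi_{j,n}\rangle|^2=\|(|\psi_j\rangle\langle\psi_j|\otimes\bbbone)\phi_{j,n}\|^2$. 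The final equality is \eqref{Gabs} with $j'=j$.

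Second, I would use the hypothesis on $G$. Since $G$ is column stochastic and $G_{lj}>0$ for every $l\neq j$ (here $d\ge 2$, because $\mathcal O$ is nontrivial), we get $G_{jj}=1-\sum_{l\neq j}G_{lj}<1$. Hence $|z_{j,k}|\le (G_{jj}G_{kk})^{1/2}<1$. The same argument applies verbatim to $\Psi$ and $F$, with $V^*$ in place of $V$.

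The one delicate step is the double Cauchy--Schwarz estimate. It must be arranged so that the bound is $\sqrt{G_{jj}G_{kk}}$ rather than something like $\sqrt{a_{j,n}}\le 1$, which would only recover the trivial bound $|z_{j,k}|\le 1$. In infinite dimension the sums over $n$ and $m$ converge absolutely because $\tau_\beta$ is trace class and $\phi_{j,n}$ is a unit vector, so interchanging the sums is harmless. In this argument energy conservation enters only through Lemma \ref{phiprop}, which makes the $X_{jk}$ eigenvectors.
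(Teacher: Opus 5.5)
Your proposal is correct and follows essentially the same route as the paper: you take (a) from the Perron--Frobenius discussion preceding the lemma and use the same Cauchy--Schwarz estimate $|z_{j,k}|^2\le G_{jj}G_{kk}$ over the reservoir indices. The only difference is the final step, where you get strictness directly from column-stochasticity, $G_{jj}=1-\sum_{l\neq j}G_{lj}<1$, whereas the paper argues that $|z_{k,l}|=1$ would force $G_{kk}=G_{ll}=1$, making two canonical basis vectors eigenvectors of $G$ with eigenvalue $1$ and contradicting the simplicity established in (a).
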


The property (b) of  Lemma \ref{c-lem} means that the energy coherences (off-diagonal density matrix elements) of a density matrix $\rho$ cannot increase when $\Phi$ is applied. Since $\Phi=\Phi|_{\mathcal D}\oplus\Phi|_{\mathcal O}$, Lemma \ref{c-lem} immediately implies the following result.

\begin{cor}
\label{cor1}
Suppose that all eigenvalues $\ve_j$ of $H$ are simple. If $G_{jj'}>0$ for all $j,j'$ (respectively $F_{jj'}>0$ for  all $j,j'$) and the Bohr energies $\ve_j-\ve_k$ of $H$ are all distinct, then $\Phi$ (respectively $\Psi$) has a simple eigenvalue $1$ with eigenvector $\rho_\beta$ and all other eigenvalues have modulus less than $1$. 
\end{cor}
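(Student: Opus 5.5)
The statement to prove is Corollary \ref{cor1}. It follows directly from Lemma \ref{phiprop}(a) together with Lemma \ref{c-lem}, so the plan is mainly to assemble these pieces.

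\emph{Block decomposition.} Since the eigenvalues $\ve_j$ are simple, Lemma \ref{phiprop}(a) shows that $\Phi$ (and likewise $\Psi$) is reduced by the decomposition $\mathcal B(\h)=\mathcal D\oplus\mathcal O$. Hence $\Phi=\Phi|_{\mathcal D}\oplus\Phi|_{\mathcal O}$, and the spectrum of $\Phi$, counted with algebraic multiplicity, is the union of the spectra of the two blocks. Because this is a direct sum of invariant subspaces, the characteristic polynomial of $\Phi$ factors as the product of those of the two blocks. It therefore suffices to treat each block separately.

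\emph{Diagonal block.} Identify $\mathcal D\simeq\mathbb C^d$, so that $\Phi|_{\mathcal D}$ is represented by the stochastic matrix $G$. Under the hypothesis $G_{jj'}>0$, Lemma \ref{c-lem}(a) gives the following. The eigenvalue $1$ of $\Phi|_{\mathcal D}$ is algebraically simple, by Perron--Frobenius for strictly positive matrices. Its eigenvector is $v_\beta$, which corresponds to the Gibbs state $\rho_\beta=\gamma_\beta$. All other eigenvalues of $\Phi|_{\mathcal D}$ have modulus strictly less than $1$.

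\emph{Off-diagonal block.} The additional hypothesis that the Bohr energies are all distinct lets us apply Lemma \ref{c-lem}(b): the spectrum of $\Phi|_{\mathcal O}$ lies in the open unit disk. Here I would check what the paper's proof of Lemma \ref{c-lem}(b) actually uses. By Lemma \ref{phiprop}(b), each $|\psi_j\rangle\langle\psi_k|$ with $j\neq k$ is an eigenvector with eigenvalue $z_{j,k}$. The required strict inequality $|z_{j,k}|<1$ should follow from the positivity $G_{jj'}>0$ through a Cauchy--Schwarz argument on the off-diagonal matrix elements of $V$.

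\emph{Conclusion.} Combining the blocks, $1$ is not an eigenvalue of $\Phi|_{\mathcal O}$, so it appears in the spectrum of $\Phi$ only through $\Phi|_{\mathcal D}$, where it is simple. Its eigenvector, viewed in $\mathcal B(\h)$, is $\rho_\beta\oplus 0=\rho_\beta$. Every other eigenvalue of $\Phi$ lies in the open unit disk. The argument for $\Psi$ is identical, using $F$ in place of $G$; Lemmas \ref{phiprop} and \ref{c-lem} are stated for $\Psi$ as well.

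The only nontrivial point is the strict bound on $\mathcal O$, and that is supplied by Lemma \ref{c-lem}(b). Beyond this, the proof is bookkeeping: the spectrum of a direct sum is the union of the spectra of the summands, and algebraic multiplicities add.
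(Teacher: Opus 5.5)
Your proposal is correct and matches the paper's argument: the paper obtains Corollary \ref{cor1} directly from the block decomposition $\Phi=\Phi|_{\mathcal D}\oplus\Phi|_{\mathcal O}$ of Lemma \ref{phiprop}(a). It then applies Lemma \ref{c-lem}(a) to the diagonal block and Lemma \ref{c-lem}(b) to the off-diagonal block, exactly as you do. One small correction to your description: the Cauchy--Schwarz step in Lemma \ref{c-lem}(b) uses the matrix elements $V_{ks,kr}$, which are diagonal in the system index, to get $|z_{k,l}|^2\le G_{kk}G_{ll}$. The case $G_{kk}=G_{ll}=1$ is then excluded by the simplicity of the eigenvalue $1$ of $G$.
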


{\bf Proof of Lemma \ref{c-lem}.} (a) was proven before the lemma. To prove (b) we note that by Lemma \ref{phiprop}(b) the eigenvalues of $\Phi|_{\mathcal O}$ are $z_{k,l}=\big\langle \psi_k, \Phi(|\psi_k\rangle\langle\psi_l|)\psi_l\big\rangle
$, for $k\neq l$, and they are simple. By the definition \eqref{defphibeta'} of $\Phi$ and using the basis $|\chi_k\rangle$ of $\h_\rr$ (see \eqref{hamilts}), we obtain
\begin{equation*}
z_{k,l} = \sum_{r,s} \frac{e^{-\beta\eta_r}}{Z_{\beta,\rr}} V_{ks,kr} \overline{V_{ls,lr}},
\end{equation*}
where $V_{ks,k's'} = \langle \psi_k\otimes\chi_s|V\,\psi_{k'}\otimes\chi_{s'}\rangle$. It follows from the Cauchy-Schwarz inequality that 
\begin{equation}
|z_{k,l}|^2 \le \Big( \sum_{r,s} \frac{e^{-\beta\eta_r}}{Z_{\beta,\rr}} |V_{ks,kr}|^2 \Big) \Big( \sum_{r,s} \frac{e^{-\beta\eta_r}}{Z_{\beta,\rr}} |V_{ls,lr}|^2\Big). 
\label{30}
\end{equation}
Again by \eqref{defphibeta'} and using the $|\chi_k\rangle$ to express the trace over $\h_\rr$ we obtain for the  diagonal matrix elements of $G$, \eqref{Gabs},
\begin{equation}
G_{jj} = \sum_{r,s}\frac{e^{-\beta \eta_r}}{Z_{\beta,\rr}} |V_{js,jr}|^2,
\end{equation}
so that \eqref{30} becomes $|z_{k,l}|^2\le G_{kk}G_{ll}$. Now if $|z_{k,l}|=1$ then $G_{kk}=G_{ll}=1$, which means that the $k$th and the $l$th column of the stochastic matrix $G$ are just the $k$th and $l$th canonical basis vector. Hence those vectors are both eigenvectors of $G$ with eigenvalue 1. This contradicts the simplicity of the eigenvalue $1$ of $G$. It follows that $|z_{k,l}|<1$ and the proof of Lemma \ref{c-lem} is complete.\hfill\qed
\smallskip

{\bf Remark.} Lemma \ref{c-lem}  holds under the weaker assumption that the stochastic matrix $G$ is irreducible and aperiodic.  {For the applications that follow, the property $G_{jj'}>0$ for all $j,j'$ holds naturally, hence our simpler formulation. }

\medskip

Recall the definitions of the channels $\Phi$ and $\Psi$ in \eqref{14.1} and \eqref{defphibeta''}. We say that the channel $\Phi$, \eqref{14.1} is {\it homogeneous} if $\Phi|_{\cal D}=\Psi|_{\cal D}$, that is, 
\begin{equation}
\label{homog}
\tr_\rr\big(V(X\otimes \tau_\beta)V^*\big) = \tr_\rr\big(V^*(X\otimes \tau_\beta)V\big),\qquad \forall X\in\cal D\subset \mathcal B(\h). 
\end{equation}
This relation is equivalent to $G=F$, the equality of the stochastic matrices \eqref{Gabs}, \eqref{Gabs''} representing the channels reduced to $\mathcal D$. 

\begin{lem}
\label{lem3}
\ \ 
\begin{itemize}
\item[\rm (a)] Suppose that $\Phi$ is a homogeneous channel with associated stochastic matrix $G$ satisfying $G_{jj'}>0$ for all $j,j'$. Then the spectrum of $\Phi|_{\cal D}$ is a subset of the interval $(-1,1]$.

\item[\rm (b)] The channel $\Phi$ is homogeneous if $V$ only allows energy conserving transitions, and with equal probability at any fixed energy, that is if
$$
|V_{jr,j's}|^2 = \left\{
\begin{array}{cl}
p_E & \text{if $\ve_j+\eta_r = \ve_{j'}+\eta_s\equiv E$}\\
0 & \text{if $\ve_j+\eta_r \neq  \ve_{j'}+\eta_s$}
\end{array}
\right.
$$
for all total energies $E$ and some $p_E$.
\end{itemize}
\end{lem}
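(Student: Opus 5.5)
For part (a), the plan is to use the relation \eqref{23} together with homogeneity. Homogeneity means $F=G$, so \eqref{23} becomes $G^*=D^{-1}GD$ with $D=\mathrm{diag}(e^{-\beta\ve_1},\ldots,e^{-\beta\ve_d})$. Equivalently, $GD=DG^*=(GD)^*$, so $GD$ is Hermitian. Since all $G_{jj'}$ are nonnegative reals, $GD$ is in fact real symmetric.

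We then conjugate by $D^{1/2}$: set $S:=D^{-1/2}GD^{1/2}=D^{-1/2}(GD)D^{-1/2}$. This is a real symmetric matrix similar to $G$. Hence $G$, and therefore $\Phi|_{\mathcal D}$, has only real eigenvalues.

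By part (a) of Lemma \ref{c-lem}, which applies because $G_{jj'}>0$, the eigenvalue $1$ is simple and all other eigenvalues have modulus strictly less than $1$. Combined with reality, the spectrum lies in $(-1,1)\cup\{1\}=(-1,1]$. The only point to check is that the eigenvalue $-1$ is excluded, and that is exactly what the strict modulus bound from Perron–Frobenius gives. No positivity of the spectrum is claimed, so nothing further is needed.

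For part (b), the plan is to compute $G$ and $F$ entrywise from \eqref{Gabs} and \eqref{Gabs''} in the product basis $\psi_j\otimes\chi_r$. This gives
$$
G_{jj'}=\sum_{r,s}\frac{e^{-\beta\eta_s}}{Z_{\beta,\rr}}|V_{jr,j's}|^2,\qquad F_{jj'}=\sum_{r,s}\frac{e^{-\beta\eta_s}}{Z_{\beta,\rr}}|(V^*)_{jr,j's}|^2=\sum_{r,s}\frac{e^{-\beta\eta_s}}{Z_{\beta,\rr}}|V_{j's,jr}|^2 .
$$
Under the hypothesis, $|V_{jr,j's}|^2=p_E$ whenever $\ve_j+\eta_r=\ve_{j'}+\eta_s=E$, and it vanishes otherwise. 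Since the condition and the value $p_E$ are symmetric under swapping $(j,r)\leftrightarrow(j',s)$, we get $|V_{j's,jr}|^2=|V_{jr,j's}|^2$ for every pair of indices. It follows that $G_{jj'}=F_{jj'}$ term by term, i.e.\ $G=F$, which is homogeneity on $\mathcal D$.

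The only mild subtlety is that $\Phi|_{\mathcal D}$ and $\Psi|_{\mathcal D}$ are also determined by $G$ and $F$ on off-diagonal output components. These vanish, however, by Lemma \ref{phiprop}(a), which applies because the $\ve_j$ are simple. I would state this explicitly when identifying $\Phi|_{\mathcal D}$ with $G$. With that, I expect no real obstacle in either part.
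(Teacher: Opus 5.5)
Your proof is correct and follows essentially the same route as the paper. In (a), the paper turns detailed balance $D^{-1}G=G^*D^{-1}$ into self-adjointness of $G$ for the weighted inner product $\langle\varphi|D^{-1}\psi\rangle$, which is equivalent to your symmetrization $D^{-1/2}GD^{1/2}$, and then invokes Lemma \ref{c-lem}(a) exactly as you do; in (b), it compares the same entrywise formulas for $G$ and $F$ and uses $|V_{jr,j's}|^2=|V_{j's,jr}|^2$.
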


{\bf Proof of Lemma \ref{lem3}.} 
(a)  For a homogeneous channel we have $G=F$, so \eqref{23} implies the intertwining relation $D^{-1}G = G^*D^{-1}$, or detailed balance condition, see \cite{N}. This implies that $G$ is selfadjoint as an operator on the Hilbert space $\h_D:=(\mathbb C^d,\langle \cdot | \cdot \rangle_D)$, were the new inner product is 
\begin{equation}
\label{newip}
\langle \varphi|\psi\rangle_D:=\langle \varphi|D^{-1}\psi\rangle.
\end{equation}
Indeed, we have
\begin{align*}
    \langle \varphi | G \psi\rangle_D=\langle \varphi |D^{-1}G\psi\rangle=\langle \varphi |G^*D^{-1}\psi\rangle=\langle G\varphi |D^{-1}\psi\rangle=\langle G\varphi |  \psi\rangle_D.
\end{align*}
Hence the spectrum of $G$ is real. 
We already know from Lemma \ref{c-lem} that $1$ is a simple eigenvalue of $G$ and all other eigenvalues have modulus strictly less than $1$. 

(b) Using the definition of $\gamma_\beta$ and expanding the trace over $\h_\rr$ in the basis $\chi_k$ of $H_\rr$, \eqref{hamilts}, we obtain for the matrix elements,
\begin{equation}\label{matelGF}
G_{jj'} = \sum_{r,s}\frac{e^{-\beta \eta_s}}{Z_{\beta,\rr}} |V_{jr,j's}|^2,\qquad F_{jj'} = \sum_{r,s}\frac{e^{-\beta \eta_s}}{Z_{\beta,\rr}} |V_{j's,jr}|^2,
\end{equation}
where $V_{ab,cd}=\langle\psi_a\otimes\chi_b, V\psi_c\otimes\chi_d\rangle$. Under the condition on $|V_{jr,j's}|^2$ in the Lemma we have $|V_{jr,j's}|^2=|V_{j's,jr}|^2$, so $G=F$. This concludes the proof of Lemma \ref{lem3}.\hfill\qed

\bigskip

We close this Section with another general structural property of $\Phi$, the monotony with respect to the free energy functional.  For $\beta\in \mathbb R$, let ${\cal F}_\beta$ be the real valued continuous map defined on the compact set of density matrices on $\cal H$ by
\begin{align}
{\cal F}_\beta(\rho)=\beta \tr(\rho H)-S(\rho),
\end{align}
where $S(\rho)=-\tr (\rho\ln \rho)$.
The functional ${\cal F}_\beta$ is the free energy of $\rho$ multiplied by $\beta$. The relative entropy of $\rho$ with respect to $\sigma$, two density matrices s.t. $\sigma>0$, is given by $S(\rho|\sigma)=\tr(\rho \ln \rho-\rho \ln \sigma)$. By Klein's inequality, $S(\rho|\sigma)\geq 0$ with equality if and only if $\rho=\sigma$ \cite{C}.

\begin{lem}\label{lem4}
Under the assumption \eqref{comV} and for $\beta\in\mathbb R$ such that $\Phi$ given in \eqref{defphibeta'} is positive, we have for any density matrix $\rho$, 
\begin{align}
\label{73}
{\cal F}_\beta(\Phi(\rho))\leq {\cal F}_\beta(\rho).
\end{align}
\end{lem}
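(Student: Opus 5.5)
The plan is to express $\mathcal F_\beta$ through the relative entropy with respect to the Gibbs state and then use monotonicity of relative entropy under positive trace-preserving maps. For any density matrix $\rho$ and $\gamma_\beta=e^{-\beta H}/Z_\beta$ (defined for every $\beta\in\mathbb R$, since $\dim\h=d<\infty$) we have $\ln\gamma_\beta=-\beta H-\ln Z_\beta$. Hence
\begin{equation*}
S(\rho|\gamma_\beta)=\tr(\rho\ln\rho)+\beta\tr(\rho H)+\ln Z_\beta={\cal F}_\beta(\rho)+\ln Z_\beta .
\end{equation*}
The constant $\ln Z_\beta$ does not depend on $\rho$. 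So \eqref{73} is equivalent to $S(\Phi(\rho)|\gamma_\beta)\le S(\rho|\gamma_\beta)$.

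Next I would use the invariance $\Phi(\gamma_\beta)=\gamma_\beta$. This follows from \eqref{comV} exactly as in the computation preceding \eqref{ev1}, but at the level of operators:
\begin{equation*}
V(e^{-\beta H}\otimes e^{-\beta H_\rr})V^*=e^{-\beta H_{\rm tot}},
\end{equation*}
and taking the partial trace gives $\gamma_\beta$. This gives
\begin{equation*}
S(\Phi(\rho)|\gamma_\beta)=S(\Phi(\rho)|\Phi(\gamma_\beta))\le S(\rho|\gamma_\beta).
\end{equation*}
The inequality is the data processing inequality (Lindblad--Uhlmann monotonicity) for a trace-preserving positive map. For $\beta>0$, $\Phi$ is CPTP by construction, so the standard Lindblad/Uhlmann theorem applies directly. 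The hypothesis "$\beta$ such that $\Phi$ is positive" covers possible extensions to other $\beta$ (e.g. a finite-dimensional reservoir with arbitrary real $\beta$, or analytically continued maps). In that case one needs monotonicity for merely positive trace-preserving maps, which holds by the result of Müller-Hermes and Reeb. Alternatively, in our setting $\Phi$ is still of the form $\tr_\rr(V(\cdot\otimes\tau_\beta)V^*)$ with $\tau_\beta\ge0$ whenever it is defined, and hence automatically CP.

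The main obstacle is justifying which version of monotonicity is available, depending on whether $\Phi$ is genuinely CPTP. I would first check whether, for the $\beta$ in question, $\tau_\beta$ is a bona fide density matrix. If so, $\Phi$ is a Stinespring-type CPTP map and Uhlmann's theorem suffices. Otherwise I would invoke the positive-map version of monotonicity.

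A minor technical point is that $\gamma_\beta>0$, so the relative entropy is finite and Klein's inequality setting applies.
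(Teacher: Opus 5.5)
Your proposal is correct and follows the paper's proof essentially verbatim: write $S(\rho|\gamma_\beta)={\cal F}_\beta(\rho)+\ln Z_\beta$, use the invariance $\Phi(\gamma_\beta)=\gamma_\beta$, and apply the data processing inequality for positive trace-preserving maps (Lindblad, Uhlmann, M\"uller-Hermes--Reeb). Your side remark is also accurate: whenever $\tau_\beta$ is a genuine density matrix, $\Phi$ of the form \eqref{defphibeta'} is automatically CPTP, so the positivity hypothesis only matters for extended maps not of that form.
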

\begin{proof}
By the data processing inequality, valid for positive trace-preserving $\Phi$, \cite{Li, U, M-HR}, and using that $\gamma_\beta$ is invariant under $\Phi$ we have,
\begin{align}
\label{74}
S(\Phi(\rho)|\Phi(\gamma_\beta))=S(\Phi(\rho)|\gamma_\beta)\leq S(\rho | \gamma_\beta).
\end{align}
Next, $S(\rho | \gamma_\beta)=-S(\rho)+\beta \tr(\rho H)+\ln(Z_\beta)={\cal F}_\beta(\rho)+\ln(Z_\beta)$ and the same holds for $\rho$ replaced by $\Phi(\rho)$, namely $S(\Phi(\rho) | \gamma_\beta)={\cal F}_\beta(\Phi(\rho))+\ln(Z_\beta)$. Then \eqref{74} implies \eqref{73}.
\end{proof}

Lemma \ref{lem4} provides a link between the energy the entropy variations under the action $\Phi$, known as Landauer's bound: 
\begin{align}\label{lb}
    \beta\big[\tr (\Phi(\rho)H)-\tr(\rho H)\big]\leq S(\Phi(\rho))-S(\rho).
\end{align}

Coming back to our thermometer model of Section \ref{sec:model} we recognize the left side of the inequality \eqref{lb} as the flow $Q_\a(\beta)$ (see \eqref{EvarA}) times $\beta$, that is we have,
\begin{equation}
\label{76}
\beta Q_\a(\beta) \le S(\Phi_\beta(\rho))-S(\rho).
\end{equation}
Suppose $\rho$ is such that $Q_\a(0)>0$. Then $\bop>0$. We know that $\Phi_\beta$ is CPTP for $\beta>0$, so \eqref{76} holds by Lemma \ref{lem4}, and it shows that (recall that $Q_\a(\beta)$ is monotone decreasing)
\begin{align}
\label{entrobop}
S(\Phi_\beta(\rho))\geq S(\rho) \ \ \text{for all}  \ \ 0<\beta\leq \bop.
\end{align}
We conclude that the von Neumann entropy can only increase under the application of the thermal channel $\Phi_\beta$, in the given range of parameters.
\medskip

{\it Remark.} For finite-dimensional $\h_\a,\h_\b$ a more refined argument shows the following result. For $\rho$ such that $Q_\a(0)>0$, we have $S(\Phi(\rho))-S(\rho)>0$ for all $0<\beta<\bop$ and $S(\Phi(\rho))-S(\rho)\geq 0$ if $\beta=\bop$. (We do not present the proof here.)

\subsection{Application: Approach of equilibrium}
\label{sec:appreq}

The energy flow of the thermometer model of Section \ref{sec:model} is given by \eqref{EvarA}, 
\begin{equation}
Q_\a(\beta) = \tr\big(\Phi_\beta(\rho) H_\a \big)-\tr(\rho H_\a),
\end{equation}
where $\Phi_\beta$, defined in \eqref{defPhibeta}, is a reduced thermal quantum channel. The associated stochastic matrix has the expression \eqref{33} of Proposition \ref{gjjp}, from which we see explicitly that $G_{jj'}>0$ for all $j,j'$. Lemma \ref{lem3}(b) shows that $\Phi_\beta$ is a homogeneous channel, because $U$ only allows for energy conserving transitions with equal probability at fixed energy | this is the assumption (G). As a consequence of Lemmas \ref{phiprop}, \ref{c-lem} and Corollary \ref{cor1} we obtain the following result.

\begin{thm}
\label{applithm}
Let $\Phi_\beta$,  \eqref{defPhibeta}, be the reduced thermal quantum channel of the thermometer model of Section \ref{sec:model}.  Then 
\begin{itemize}
\item[{\rm (a)}] $\Phi_\beta$ leaves the space of diagonal matrices (w.r.t. $H_\a$) invariant, and it leaves the space of off-diagonal matrices invariant: $\Phi_\beta=\Phi_\beta|_{\mathcal D}\oplus\Phi_\beta|_{\mathcal O}$. 

\item[{\rm (b)}] If all nonzero Bohr energies $\ve_j-\ve_k$ of $H_\a$ are distinct, then $\Phi_\beta|_{\mathcal D}$ has a simple eigenvalue $1$ with associated spectral projection $|v_\beta\rangle\langle v_0|$ (with $\mathcal D$ identified with $\mathbb C^d$ | see also \eqref{ev0}, \eqref{ev1}) and all other eigenvalues of $\Phi_\beta|_{\mathcal D}$, as well as all eigenvalues of $\Phi_\beta|_{\mathcal O}$, have modulus strictly less than $1$. 
\end{itemize}
\end{thm}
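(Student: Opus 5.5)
The plan is to check that the thermometer channel $\Phi_\beta$ satisfies the hypotheses of the general results of Section~\ref{secRedQChannels}, applied with $\h=\h_\a$, $\h_\rr=\h_\b$, $H=H_\a$, $H_\rr=H_\b$ and $V=U$.

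\emph{Hypotheses.} The commutation relation \eqref{comV} is exactly \eqref{m1}. The eigenvalues of $H_\a$ are simple by (ND). The channel \eqref{defPhibeta} is then precisely \eqref{defphibeta'}.

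\emph{Part (a).} This follows directly from Lemma~\ref{phiprop}(a), which gives the block decomposition $\Phi_\beta=\Phi_\beta|_{\mathcal D}\oplus\Phi_\beta|_{\mathcal O}$.

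\emph{Part (b), step 1: positivity of $G$.} Formula \eqref{33} of Proposition~\ref{gjjp} shows $G_{jj'}(\beta)>0$ for all $j,j'$. Indeed, every summand satisfies $e^{-\beta(\ve_s-\ve_{j'})}-e^{-\beta(\ve_{s+1}-\ve_{j'})}>0$ for $\beta>0$, and the last term ($s=d$) is strictly positive. Under assumption (R) I use the averaged quantities, as Proposition~\ref{gjjp} does.

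\emph{Part (b), step 2: spectral conclusions.} Lemma~\ref{c-lem}(a) then gives that $1$ is a simple eigenvalue of $\Phi_\beta|_{\mathcal D}\simeq G$, with right eigenvector $v_\beta$ by \eqref{ev1}, and that all other eigenvalues of $\Phi_\beta|_{\mathcal D}$ have modulus $<1$. Under the assumption that the nonzero Bohr energies are distinct, Lemma~\ref{c-lem}(b), or equivalently Corollary~\ref{cor1}, places the spectrum of $\Phi_\beta|_{\mathcal O}$ in the open unit disk.

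\emph{Part (b), step 3: the spectral projection.} Since $1$ is simple, its Riesz projection has rank one and equals $|v\rangle\langle w|/\langle w|v\rangle$. Here $v$ is a right eigenvector and $w$ a left eigenvector, i.e.\ $G^*w=w$. By \eqref{ev1} we take $v=v_\beta$, and by \eqref{ev0} we take $w=v_0$. With $v_\beta$ normalized so that $\sum_j (v_\beta)_j=1$, we get $\langle v_0|v_\beta\rangle=1$, so the projection is $|v_\beta\rangle\langle v_0|$. This uses the normalization $Z_\beta^{-1}e^{-\beta\ve_j}$ of $v_\beta$.

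\emph{Possible obstacle.} The only subtle point is the random case (R). There $\Phi_\beta$ should be understood as the averaged channel $\mathbb E[\Phi_\beta]$. This average is again CPTP, commutes with $\mathcal U_t$ (each realization does), and leaves $\gamma_\beta$ invariant. Lemma~\ref{phiprop} is stated for a single unitary $V$, but its proof uses only the intertwining \eqref{16}, which is preserved under averaging. For Lemma~\ref{c-lem}(b), the Cauchy--Schwarz bound gives $|z_{k,l}|\le\sqrt{G_{kk}G_{ll}}$ for each realization. Averaging and applying Cauchy--Schwarz once more yields $|\mathbb E z_{k,l}|\le\sqrt{\mathbb E G_{kk}\,\mathbb E G_{ll}}$. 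The contradiction argument with the simple eigenvalue of the averaged $G$ then goes through unchanged. Under (G) no averaging is needed and the proof is a direct citation.
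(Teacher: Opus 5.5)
Your proposal is correct and follows the paper's argument: strict positivity of $G_{jj'}$ read off from \eqref{33}, then Lemma~\ref{phiprop}, Lemma~\ref{c-lem} and Corollary~\ref{cor1}, with the rank-one spectral projection $|v_\beta\rangle\langle v_0|$ obtained from \eqref{ev0} and \eqref{ev1}. Your explicit left/right-eigenvector normalization and your averaged treatment of case \textbf{(R)} fill in details the paper leaves implicit; the paper's extra appeal to homogeneity via Lemma~\ref{lem3}(b) is not needed for the statement as given.
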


We may let the system in an initial state $\rho$ interact with the thermometer, then decouple it, resulting in the state $\Phi_\beta(\rho)$, and letting it subsequently interact with a fresh thermometer (at the same temperature as the previous one). Repeating this process $k$ times gives a discrete dynamics on the system: 
\begin{equation}
\rho\mapsto \rho^{(k)}\equiv \Phi_\beta^k(\rho),\qquad k=1,2,\ldots
\end{equation}
(apply $k$ times the map $\Phi_\beta$ to $\rho$). If the nonzero Bohr energies are distinct, then according to Theorem \ref{applithm}(b), we have for any density matrix $\rho$,
\begin{equation}
\label{rte}
\lim_{k\rightarrow\infty}\rho^{(k)}  =\gamma_\beta,
\end{equation}
where the speed of convergence is exponentially fast in $k$. The relation \eqref{rte} shows that a repeated interaction with fresh thermometers, all at a given temperature $\beta$, drives the system to the equilibrium state $\gamma_\beta$. For $k$ sufficiently large, \eqref{rte} ensures that $\rho^{(k)}$ is not the ground nor the most excited state of $H_\a$. Thus by Theorem \ref{thm2}, $\rho^{(k)}$ has a well defined contact temperature $\bop^{(k)}\in\mathbb R$. It satisfies (see \eqref{flow}),
\begin{equation}
\label{51.1}
Q_\a (\bop^{(k)}) = \vve \cdot G \vec{p}^{\, (k)} -\vve\cdot\vec{p}^{\, (k)} =0,
\end{equation}
where $\vec{p}^{\, (k)} = (p^{(k)}_1,\ldots,p^{(k)}_d)$ is the vector of the populations of $\rho^{(k)}$, with $p^{(k)}_j = \langle\psi^\a_j|\rho^{(k)}\psi^\a_j\rangle$. Of course, $G$ and the $\vec p^{\, (k)}$ on the right side of \eqref{51.1} depend on $\beta$ and the solution of the equation gives $\bop^{(k)}$. By \eqref{rte}, we have $p_j^{(k)}\rightarrow e^{-\beta \ve_j}/Z_\a$ as $k\rightarrow\infty$. 

Consider now the flow $Q_\a$ as a function of two variables, $\beta>0$ and $R\in\mathbb C^{d^2}$ (the matrix elements of an operator acting on $\mathbb C^d$). An explicit formula is given in \eqref{qabeta'}, for example. Suppose we have $Q_\a(\beta_0,\rho_0)=0$ for some $\beta_0\in\mathbb R$ and some density matrix $\rho_0$ (viewed as an element in $\mathbb C^{d^2}$). Since $\partial_\beta Q_
\a(\beta_0,\rho_0)\neq 0$ (this function is strictly monotonic, Theorem \ref{thm2}), the implicit function theorem tells us that there is a unique map $R\mapsto\beta(R)$, defined and differentiable in a neighbourhood of $\rho_0$, with $\beta(\rho_0)=\beta_0$, such that $Q_\a(\beta(R),R)=0$. Now let $\rho_0=\gamma_\beta$. We have $Q_
\a(\beta,\gamma_\beta)=0$ (see the remarks after Theorem \ref{thm2}), and for $k$ large, $\rho^{(k)}$ is close to $\gamma_\beta$ due to \eqref{rte}. By definition, $Q_\a(\bop^{(k)},\rho^{(k)})=0$. Therefore by uniqueness, $\beta(\rho^{(k)})=\bop^{(k)}$ for large $k$ and in particular, 
$$
\lim_{k\rightarrow\infty} \bop^{(k)} = \beta.
$$

\noindent {\bf Acknowledegments.} This work (A.J.) is partially supported by the French National Research Agency, grant Dynacqus ANR-24-CE40-5714-02. M.M.~was supported by a Discovery Grant from NSERC, the Natural Sciences and Engineering Reserach Council of Canada. Our thanks go to both our institutions for hosting mutual research visits. We thank D.~Spehner for useful discussions.

\appendix

\section{Proofs \ref{Qunbounded}}

\subsection{Proof of Theorem \ref{Qunbounded}}
\label{sec:proothm1}
We combine \eqref{flow} with \eqref{33},
\begin{align}
\label{38}
    Q_A=\sum_{j j'=1}^{d}\ve_j\sum_{s=\max\{j,j'\}}^{d}\frac{1}{s}\big(e^{-\beta (\ve_s-\ve_{j'})}-e^{-\beta (\ve_{s+1}-\ve_{j'})}\big)p_{j'}-\sum_{j=1}^{d} \ve_j p_j.
\end{align}
The right sides of \eqref{38} and of \eqref{qabeta} are entire analytic functions in $\beta\in\mathbb C$, which we call $F_1(\beta)$ and $F_2(\beta)$, respectively. To prove Theorem \ref{Qunbounded} it suffices to show that $\partial_\beta^\nu|_{\beta=0}F_1(\beta) =\partial_\beta^\nu|_{\beta=0}F_2(\beta)$ for all $\nu=0,1,2,\ldots$ We have for $\nu\ge 1$
\begin{equation}
\label{013}
\partial_\beta^\nu|_{\beta=0}F_1(\beta) = (-1)^\nu \sum_{jj'=1}^d \ve_j p_{j'} \sum_{k=M_{jj'}}^d\frac{(\ve_k-\ve_{j'})^\nu - (\ve_{k+1}-\ve_{j'})^\nu}{k}
\end{equation}
where $M_{jj'}=\max\{j,j'\}$. Our convention here is that if $k=d$ then $(\ve_{k+1}-\ve_{j'})^\nu=0$. The sum over $k$ is telescopic,
\begin{align}
\label{014}
\sum_{k=M_{jj'}}^d&\frac{(\ve_k-\ve_{j'})^\nu - (\ve_{k+1}-\ve_{j'})^\nu}{k} = \frac{(\ve_{M_{jj'}}-\ve_{j'})^\nu}{M_{jj'}} - \sum_{k=M_{jj'}+1}^d \frac{(\ve_k-\ve_{j'})^\nu}{k(k-1)}
\end{align}
where our convention is that the last sum is equal to zero for $M_{jj'}=d$. The equations \eqref{013} and \eqref{014}, together with the fact that $(\ve_{M_{jj'}}-\ve_{j'})^\nu=0$ unless $j'<j$ give, 
\begin{align}
\label{015}
\partial_\beta^\nu|_{\beta=0}F_1(\beta)  =& (-1)^\nu \sum_{j'<j} \ve_jp_{j'}  \frac{(\ve_j-\ve_{j'})^\nu}j -(-1)^\nu\sum_{jj'k=1}^d\chi\big(k>M_{jj'} \big)  \ve_j p_{j'} \frac{(\ve_k-\ve_{j'})^\nu}{k(k-1)}.
\end{align}
Next we take the derivative of \eqref{qabeta}, for $\nu\ge 1$, using that $\sum_{l=1}^{k-1}(
\ve_k-\ve_l) = (k-1)\ve_k-\sum_{l=1}^{k-1}\ve_l$,
\begin{align}
\partial^\nu_\beta|_{\beta=0} F_2(\beta) =& 
 (-1)^\nu\sum_{j<k} \ve_k p_j  \frac{(\ve_k-\ve_j)^\nu}{k}  - (-1)^\nu \sum_{jkl=1}^d\chi(j<k)\chi(l<k) \ve_l p_j \frac{(\ve_k-\ve_j)^\nu}{k(k-1)}.
\label{016}
\end{align}
Noting that $\chi(k>M_{jj'}) = \chi(j<k)\chi(j'<k)$ and doing a relabeling of the summation indices one sees that \eqref{015} equals \eqref{016}. This shows that the derivatives of all orders $\nu\ge1$ at $\beta=0$ coincide, so that the two entire functions $F_1$ and $F_2$ differ by a constant. From \eqref{qabeta} and \eqref{38} we obtain  $\lim_{\beta\rightarrow\infty}F_1(\beta)=-\sum_{k=2}^{d}\frac{p_k}{k}\sum_{l=1}^{k-1}(\ve_k-\ve_l)$ and 
\begin{align*}
\lim_{\beta\rightarrow\infty}F_2(\beta)&=\sum_{j\leq j'}\ve_j\frac{p_{j'}}{j'}-\sum_{j=1}^d\ve_jp_j=\sum_{j'=2}^d\sum_{j=1}^{j'-1}\ve_j\frac{p_{j'}}{j'}-\sum_{j=2}^d \ve_j p_j(1-\frac{1}{j})\nonumber\\
&=\sum_{k=2}^d\frac{p_{k}}{k}\sum_{l=1}^{k-1}\ve_l-\sum_{k=2}^d\frac{p_{k}}{k}\ve_k (k-1)=\sum_{k=2}^d\frac{p_{k}}{k}\sum_{l=1}^{k-1}(\ve_l-\ve_k),
\end{align*}
where we used $k-1=\sum_{l=1}^{k-1}1$. This shows that $F_1=F_2$. 
 Finally, we use \eqref{flow} and the fact that $G(\beta=0)=\frac1d \mathbf 1 \mathbf 1^T$ to get that $F_1(0)$ is given by \eqref{qa0}.  The proof of Theorem \ref{Qunbounded} is complete. \hfill\qed

\subsection{Proof of Theorem \ref{thm:QbetainR}}
\label{sec:proofthmQbar}

Our first goal is to find an expression for $Q_{\a,K}$ analogous to \eqref{flow}, \eqref{33}. Proceeding as in the proof of Proposition \ref{gjjp} we have (see \eqref{14}, \eqref{matG})
\begin{align}
\label{12K}
{\rm Tr}& \big(U(\rho\otimes\tau_{\beta,K}) U^* H_\a\big)  = \vec\ve\cdot G_K\vec p,\nonumber\\
[G_K]_{jj'} &= \sum_\ell \frac{1}{g_\ell} \sum_{k,k'=0}^K \chi\big(\ve_j+\eta_k=\ve_{j'}+\eta_{k'}=E_\ell\big)   \frac{e^{-\beta \eta_{k'}}}{Z_{\b,K}}.
\end{align}
We recall the notation $\eta_k=\frac{k}{N}$,  $E_\ell=\ve_1+\frac{\ell}{N}$ and $\ve_j=\ve_1+\frac{\ell_j}{N}$. 
The constraint $\chi(\cdots)$ on the energies in \eqref{12K} is,
\begin{align}
\ve_j+\eta_k=\ve_{j'}+\eta_{k'}=E_\ell &\Longleftrightarrow \ve_1+\frac{\ell_j}{N}+\frac kN = \ve_1+\frac{\ell_{j'}}{N}+\frac{k'}{N} = \ve_1+\frac{\ell}{N}\nonumber\\
& \Longleftrightarrow \ell_j+k=\ell_{j'}+k'=\ell.
\end{align}
For $j,j'$ fixed, the sums over  $\ell, k,k'$ in \eqref{12K} reduce to the single sum over $\ell$ in the range $\max\{\ell_j,\ell_{j'}\}\le \ell\le \min\{\ell_j,\ell_{j'}\}+K$. Therefore, as $\max\{\ell_j,\ell_{j'}\} = \ell_{\max\{j,j'\}}$ (and similar for the $\min$), and as $\eta_{k'}=E_\ell-\ve_{j'}$ in \eqref{12K}, we have
\begin{equation}
\label{12KK}
[G_K]_{jj'} = \frac{e^{\beta\ve_{j'}}}{Z_{\b,K}} \sum_{\ell=\ell_{\max\{j,j'\}}}^{\ell_{\min\{j,j'\}}+K} \frac{1}{g_\ell}  e^{-\beta E_\ell}.
\end{equation}
We assume that $K/N>\ve_d-\ve_1$. The lowest energy $E_\ell$ is $\ve_1$ and the highest one is $\ve_d+K/N$. This interval is subdivided as
$$
\ve_1 < \ve_2 <\cdots <\ve_{d-1} <\ve_d < \ve_1+\frac KN < \ve_2+\frac KN < \cdots < \ve_{d-1}+\frac KN<\ve_d +\frac KN,
$$
which corresponds to the string of inequalities for the labels of the total energy $E_\ell=\ve_1+\frac{\ell}{N}$,
$$
0=\ell_1 <\ell_2<\cdots <\ell_{d-1} <\ell_d < \ell_1+K<\ell_2+K<\cdots< \ell_{d-1}+K<\ell_d+K.
$$
Set $\ell_{d+1}=\ell_1+K,\, \ell_{d+2}=\ell_2+K,\ldots, \,\ell_{2d}=\ell_d+K$, so that the inequality string is $\ell_1<\ell_2<\cdots<\ell_{2d-1}<\ell_{2d}$. The degeneracy $g_\ell$ is constant in each subinterval: For $\ell$ in the first to the $d$-th interval, $g_\ell$ is ascending from $1$ to $d$ and after it is descending from $d-1$ back to $1$. Namely, $g_\ell=j$ for $\ell_j\le \ell<\ell_{j+1}$ if $j=1,\ldots,d-1$, then $g_\ell=d$ for $\ell_d\le\ell\le\ell_{d+1}$ and finally $g_\ell= 2d-j$ if $\ell_j< \ell\le\ell_{j+1}$ for $j=d+1,\ldots,2d-1$. 

Denote for a moment $m=\min\{j,j'\}$ and $M=\max\{j,j'\}$ (with $j,j'$ fixed). 
The degeneracies in \eqref{12KK} take the values 
\begin{equation}
g_\ell = \left\{
\begin{array}{cl}
M & \text{for $\ell_M\le \ell <\ell_{M+1}$},\\
M+1 & \text{for $\ell_{M+1}\le \ell <\ell_{M+2}$},\\
\vdots & \vdots\\
d & \text{for $\ell_d\le\ell\le\ell_{d+1}$},\\
d-1 & \text{for $\ell_{d+1}<\ell\le\ell_{d+2}$},\\
\vdots & \vdots\\
d-m+1 & \text{for $\ell_{d+m-1} < \ell\le \ell_{d+m}$.}
\end{array}
\right.
\end{equation}
Accordingly, we have 
\begin{align}
[G_K]_{jj'} = \frac{e^{\beta\ve_{j'}}}{Z_{\b,K}} &\Big[ \sum_{s=\max\{j,j'\}}^{d -1} \frac{1}{s}  \sum_{\ell_s\le \ell<\ell_{s+1}}e^{-\beta E_\ell} &\nonumber\\
&+\frac1d\sum_{\ell_d\le\ell\le\ell_{d+1}} e^{-\beta E_\ell}+ \sum_{s=1}^{\min\{j,j'\}-1} \frac{1}{d-s}  \sum_{\ell_{d+s}< \ell\le \ell_{d+s+1}} e^{-\beta E_\ell}\Big].
\label{12s}
\end{align}
We have $E_\ell=\ve_1+\ell/N$ and the sums over $\ell$ can be evaluated explicitly,
$$
\sum_{r\le \ell\le s}e^{-\beta E_\ell} = e^{-\beta \ve_1}\,  \frac{e^{-\beta r/N}-e^{-\beta(s+1)/N}}{1-e^{-\beta/N}}.
$$
Then
\begin{align}
\sum_{\ell_s\le\ell<\ell_{s+1}}e^{-\beta E_\ell} = \frac{e^{-\beta\ve_s}-e^{-\beta\ve_{s+1}}}{1-e^{-\beta/N}},&\quad \sum_{\ell_d\le \ell\le\ell_{d+1}}e^{-\beta E_\ell} =\frac{e^{-\beta\ve_d} - e^{-\beta(K+1)/N}e^{-\beta\ve_1}}{1-e^{-\beta/N}}\nonumber\\
\sum_{\ell_{d+s}<\ell\le\ell_{d+s+1}}e^{-\beta E_\ell} = & \ e^{-\beta(K+1)/N} \frac{e^{-\beta\ve_s}-e^{-\beta\ve_{s+1}}}{1-e^{-\beta/N}}
\end{align}
Using the last formulas and \eqref{f15} we obtain from \eqref{12s},
\begin{align}
\label{f19}
[G_K]_{jj'} = &
\frac{e^{\beta\ve_{j'}}}{1-e^{-\beta(K+1)/N}}
\Big[  \sum_{s=\max\{j,j'\}}^{d-1}\frac1s \big(e^{-\beta\ve_s}-e^{-\beta\ve_{s+1}}\big) \nonumber\\
& + e^{-\beta(K+1)/N} \sum_{s=1}^{\min\{j,j'\}-1}  \frac{1}{d-s}\big(e^{-\beta\ve_s}-e^{-\beta\ve_{s+1}}\big) + \frac1d \big( e^{-\beta\ve_d} - e^{-\beta(K+1)/N}e^{-\beta\ve_1}\big)\Big].
\end{align}
For $\beta=0$ we have $Z_{\b,K}=1+K$ (see also \eqref{f15}),  and using that $\sum_{r\le \ell\le s}1=s-r+1$ in \eqref{12s} we arrive at
\begin{align}
\label{beta0}
[G_K]_{jj'} = & \frac1{K+1}\Bigg[\sum_{s=\max\{j,j'\}}^{d-1}\frac{(\ve_{s+1}-\ve_s)N}{s}+\frac{K-(\ve_d-\ve_1)N}{d}+\sum_{s=1}^{\min\{j,j'\}-1}\frac{(\ve_{s+1}-\ve_s)N}{d-s}\Bigg].
\end{align}
We thus have shown that 
\begin{align}
G_{jj'}(\beta)\equiv  \lim_{K\rightarrow\infty}
& [G_K]_{jj'}(\beta)\nonumber\\[10pt]
  = & 
\left\{
\begin{array}{cl}
e^{\beta\ve_{j'}}\sum_{s=\max\{j,j'\}}^{d-1}\frac{1}{s}\big(e^{-\beta \ve_s}-e^{-\beta \ve_{s+1}}\big) + e^{\beta\ve_{j'}}\frac1d e^{-\beta\ve_d} & \text{for $\beta>0$,}\\[15pt]
-e^{\beta\ve_{j'}}  \sum_{s=1}^{\min\{j,j'\}-1}\frac{1}{d-s} \big(e^{-\beta\ve_s}-e^{-\beta\ve_{s+1}}\big)+ e^{\beta\ve_{j'}}\frac1d e^{-\beta\ve_1} & \text{for $\beta<0$,}\\[15pt]
\frac1d  & \text{for $\beta=0$,}
\end{array}
\right.
\label{f22}
\end{align}
where it is understood that the sums yield zero for $\max\{j,j'\}=d$, respectively $\min\{j,j'\}=1$.
The expression \eqref{f22} for $\beta>0$ coincides with \eqref{33} and the branches of \eqref{f22} for $\beta\neq 0$ as well as \eqref{33} equal $1/d$ for $\beta=0$. So \eqref{f22} is a continuous extension of \eqref{33} to $\beta\in\mathbb R$. We introduce the two matrices $G^\pm(\beta)$, each defined for $\beta\in\mathbb R$, by their matrix elements, (for \eqref{f24} make a change of variables in the second branch of \eqref{f22})
\begin{align}
G_{jj'}^+(\beta) & = e^{\beta\ve_{j'}}\sum_{s=\max\{j,j'\}}^{d-1}\frac{1}{s}\big(e^{-\beta \ve_s}-e^{-\beta \ve_{s+1}}\big) + e^{\beta\ve_{j'}}\frac1d e^{-\beta\ve_d},\label{f23}\\[10pt]
G_{jj'}^-(\beta) & = -e^{\beta\ve_{j'}}  \sum^{d-1}_{s=d-\min\{j,j'\}+1}\frac{1}{s} \big(e^{-\beta\ve_{d-s}}-e^{-\beta\ve_{d-s+1}}\big)+ e^{\beta\ve_{j'}}\frac1d e^{-\beta\ve_1}.
\label{f24}
\end{align}
Then by \eqref{f22}, $G(\beta)=G^+(\beta)$ for $\beta>0$ and $G(\beta)=G^-(\beta)$ for $\beta<0$. When we want to stress that $G^\pm$ are functions of the energy vector $\vec{\ve}=(\ve_1,\ldots,\ve_d)$, then we write $G^\pm(\beta,\vec{\ve}\,)$. Consider now the linear map $V$ on $\mathbb R^d\rightarrow\mathbb R^d$ defined on the canonical basis elements $\vec{e}_j$ by
\begin{equation}
\label{V}
V\vec{e}_j = \vec{e}_{d-j+1},\qquad j=1,\ldots,d.
\end{equation}
$V$ simply permutes coordinates of vectors, in particular,  $\ve_j\leftrightarrow \ve_{d-j+1}$. $V$ is unitary and $V^2=\bbbone$. Then we have 
$$
[G^-(\beta, V\vec{\ve}\,)]_{jj'} = e^{\beta\ve_{d-j'+1}}  \sum^{d-1}_{s=d-\min\{j,j'\}+1}\frac{1}{s} \big(e^{-\beta\ve_{s}}-e^{-\beta\ve_{s+1}}\big)+ e^{\beta\ve_{d-j'+1}}\frac1d e^{-\beta\ve_d}.
$$
Comparing with \eqref{f23} we get $[G^+(\beta,\vec{\ve}\,)]_{d-j+1,d-j'+1} = [G^-(\beta, V\vec{\ve}\,)]_{jj'}$, and as $V^2=\bbbone$,
\begin{equation}
\label{f28}
[G^-(\beta, \vec{\ve}\,)]_{jj'} = [G^+(\beta,V\vec{\ve}\,)]_{d-j+1,d-j'+1} = \big[V^*G^+(\beta,V\vec{\ve}\,) V\big]_{jj'}.
\end{equation}
Next we define, for $\beta\in\mathbb R$, 
\begin{equation}
\label{f25}
Q_\a^\pm(\beta) = \vec{\ve}\cdot G^\pm(\beta)\vec{p} - \vec{\ve}\cdot\vec{p}.
\end{equation}
Then as $G^-(\beta,\vec{\ve}\,)= V^* G^+(\beta,V\vec{\ve}\,)V$ by \eqref{f28} and as $V$ is unitary, we obtain
\begin{align}
Q^-_\a(\beta) = (V\vec{\ve}\,)\cdot G^+(\beta, V\vec{\ve}\,) V\vec{p} -(V\vec{\ve}\,)\cdot(V\vec{p}\,). 
\label{f30}
\end{align}
It follows from \eqref{f22}, \eqref{f23}, \eqref{f24} that  $G^\pm(\beta=0,\vec\ve\,)=\frac1d \mathbf 1\mathbf 1^T$ regardless of $\vec\ve$, so that 
\begin{equation}
\label{contflowzero}
Q^+_\a(0)=Q^-_\a(0) = \sum_{j=1}^d \ve_j(\frac1d -p_j).
\end{equation}

The right side of \eqref{f30} is given by \eqref{qabeta} in which $\vec\ve$ is replaced by 
$\vec\eta\equiv V\vec{\ve}$ and $\vec p$ is replaced by $\vec q\equiv V\vec p$,
\begin{align}
Q&^-_\a(\beta)=\sum_{1\leq j < k \leq d}\frac{e^{-\beta (\eta_k -\eta_j)}}{k(k-1)}q_j \sum_{l=1}^{k-1}(\eta_k-\eta_l)- \sum_{k=2}^{d}\frac{q_k}{k}\sum_{l=1}^{k-1}(\eta_k-\eta_l)\nonumber\\
&=\sum_{1\leq j < k \leq d}\frac{e^{-\beta (\ve_{d-k+1} -\ve_{d-j+1})}}{k(k-1)}p_{d-j+1} \sum_{l=1}^{k-1}(\ve_{d-k+1}-\ve_{d-l+1})- \sum_{k=2}^{d}\frac{p_{d-k+1}}{k}\sum_{l=1}^{k-1}(\ve_{d-k+1}-\ve_{d-l+1})\nonumber\\
&=\sum_{1\leq k < j \leq d}\frac{e^{-\beta (\ve_k -\ve_j)}}{(d-k+1)(d-k)}p_j \sum_{l=k+1}^{d}(\ve_k-\ve_l)- \sum_{k=1}^{d-1}\frac{p_k}{d-k+1}\sum_{l=k+1}^{d}(\ve_k-\ve_l).
\label{fqabeta}
\end{align}
Equivalently (as in \eqref{qabeta'}),
\begin{align}
Q^-_\a(\beta) = \sum_{1\le k< j\le d} \frac{e^{-\beta(\ve_k-\ve_j)}p_j - p_k}{(d-k+1)(d-k)} \sum_{l=k+1}^d(\ve_k-\ve_l).
\label{f31}
\end{align}
This concludes the proof of Theorem \ref{thm:QbetainR}.\hfill\qed
\medskip

For $\beta>0$, the matrix $G^+(\beta)$ coincides with $G(\beta)$ defined in \eqref{33} and we established the spectral properties of the latter in Proposition \ref{gjjp}. For completeness we address here the properties of $G^-(\beta)$ for $\beta < 0$.
\begin{lem}
For $\beta<0$ the matrix $G^-(\beta)$, \eqref{f24} has a simple eigenvalue $1$, with corresponding eigenvector $v_\beta$ (defined after \eqref{ev1}), and its spectrum is a subset of $(0,1]$.
\end{lem}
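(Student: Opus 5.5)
The plan is to reduce the case $\beta<0$ to the already established case $\beta>0$ (Proposition \ref{gjjp}) through the reflection identity \eqref{f28}, $G^-(\beta,\vec\ve\,)=V^*G^+(\beta,V\vec\ve\,)V$. Since $V$ is a permutation matrix, $G^-(\beta,\vec\ve\,)$ is similar to $G^+(\beta,V\vec\ve\,)$, so the two have the same spectrum with the same multiplicities, and eigenvectors correspond via $V$. The difficulty is that $V\vec\ve=(\ve_d,\ldots,\ve_1)$ is decreasingly ordered and $\beta<0$, so Proposition \ref{gjjp} does not apply directly.

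To fix the ordering, I will use a second symmetry: flipping the sign of both the energies and the inverse temperature. Every entry of $G^+(\beta,\vec\eta\,)$ in \eqref{f23} (equivalently \eqref{33}, with the convention $e^{-\beta\eta_{d+1}}=0$) depends on $\beta$ and $\vec\eta$ only through products $\beta(\eta_s-\eta_{j'})$. Hence
$$G^+(\beta,\vec\eta\,)=G^+(-\beta,-\vec\eta\,).$$
Take $\vec\eta=V\vec\ve$ and set $\vec\ve\,'=-V\vec\ve$, that is, $\ve'_j=-\ve_{d-j+1}$. Then $\ve'_1<\cdots<\ve'_d$ is increasing and nondegenerate, and $-\beta>0$. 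Therefore $G^-(\beta,\vec\ve\,)=V^*G^+(-\beta,\vec\ve\,')V$, where $G^+(-\beta,\vec\ve\,')$ is exactly the matrix \eqref{33} for the ordered spectrum $\vec\ve\,'$ at the positive inverse temperature $-\beta$.

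Next I need to check that the spectral part of Proposition \ref{gjjp} holds for arbitrary ordered nondegenerate energies, not only for those satisfying the rationality condition (S). Its proof of the spectral claim used only formula \eqref{33}, through three facts:
\begin{itemize}
\item strict positivity of all entries, which gives Perron--Frobenius;
\item the stochastic property together with invariance of $(e^{-\beta\ve_j})_j$;
\item the factorization $\Gamma=TPT^*$ with $g_1>\cdots>g_d>0$, which holds for any increasing energies when $\beta>0$.
\end{itemize}
The stochasticity and invariance can be verified directly from \eqref{33} by telescoping, or by analyticity and continuity in $\vec\ve$ from the rational case. So $G^+(-\beta,\vec\ve\,')$ has spectrum in $(0,1]$ with $1$ simple, and its Perron eigenvector is $(e^{\beta\ve'_j})_j=(e^{-\beta\ve_{d-j+1}})_j$.

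Applying $V$ to this eigenvector gives $(e^{-\beta\ve_j})_j\propto v_\beta$, which is therefore the eigenvector of $G^-(\beta)$ for the simple eigenvalue $1$. The only step requiring care is this last bookkeeping: confirming that the sign flip and the permutation compose correctly in \eqref{f28}, and that the proof of Proposition \ref{gjjp} did not secretly rely on condition (S).
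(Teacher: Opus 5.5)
Your proof is correct, but it takes a different route from the paper's.

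\textbf{The paper's route.} The paper works with $G^-(\beta)$ directly. It obtains stochasticity, invariance of $v_\beta$, and the detailed balance relation by passing to the limit $K\to\infty$ in the finite-thermometer channels $\Phi_{\beta,K}$, which gives real spectrum in $[-1,1]$. It then uses strict positivity of the entries of \eqref{f24} to apply Perron--Frobenius. Finally it redoes the positivity argument by hand, writing $G^-=\Gamma^-D^{-1}$ with $\Gamma^-=T_-P_-T_-^*$ and $T_-$ lower triangular.

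\textbf{Your route.} You combine the reflection identity \eqref{f28} with the invariance of \eqref{f23} under $(\beta,\vec\ve\,)\mapsto(-\beta,-\vec\ve\,)$. This gives $G^-(\beta,\vec\ve\,)=V\,G^+(|\beta|,\vec\ve\,')\,V$ with $\ve'_j=-\ve_{d-j+1}$ increasing, so every claim, including that the eigenvector is $Vw\propto v_\beta$, follows from Proposition \ref{gjjp}. This also explains the paper's computation structurally: one has $\Gamma^-=V\Gamma'V$ and $T_-=VTV$, where $\Gamma'$ is the matrix $\Gamma$ of \eqref{yyy} built from $(|\beta|,\vec\ve\,')$.

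\textbf{Comparison.} Your approach avoids both the limiting argument and the duplicated algebra. The paper's approach additionally shows that $G^-(\beta)$ is a limit of genuine stochastic matrices of CPTP channels obeying detailed balance. That is physically informative but not needed for the lemma.

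\textbf{On condition (S).} Your worry can be dispatched without re-auditing the proof of Proposition \ref{gjjp}. If $\ve_s=\ve_1+l_s/N$, then $\ve'_j-\ve'_1=\ve_d-\ve_{d-j+1}=(l_d-l_{d-j+1})/N$, with $0=l'_1<l'_2<\cdots<l'_d$. So $\vec\ve\,'$ itself satisfies (S) and (ND), and Proposition \ref{gjjp} applies verbatim at inverse temperature $|\beta|$. Your direct telescoping check of $\sum_j G_{jj'}=1$ and $Gv_\beta=v_\beta$ is also valid.
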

\begin{proof}
For $K\in \mathbb N$ and $\beta\in \mathbb R$ we consider the quantum channel $\Phi_{\beta, K}$ on $\mathcal{B}(\mathcal{H})$ given by
\begin{align}\label{phiK}
    \Phi_{\beta, K}(X)={\rm Tr} \big(U(X\otimes\tau_{\beta,K}) U^*\big).
\end{align}
Let $\rho$ be a density matrix (with diagonal $\vec p$\,). By \eqref{12K} we have $\tr  (\Phi_{\beta,K}(\rho)H_\a)=\vec\ve\cdot G_K(\beta)\vec p$ with 
\begin{align}\label{GKPhiK}
    [G_K(\beta)]_{jj'}= \big[\Phi_{K,\beta}(\ket{\psi_{j'}}\bra{\psi_{j'}})\big]_{jj}.
\end{align}
Note that both $\Phi_{\beta, K}$ and $G_K(\beta)$ are analytic functions of $\beta\in \mathbb C$ for finite $K$.
As $\Phi_{\beta, K}$ is CPTP and leaves the Gibbs state $\gamma_\beta$ invariant, the matrix $G_K(\beta)$ is stochastic and admits the $K$-independent vector $v_\beta$ as invariant vector, see \eqref{ev1}. Therefore, $G(\beta)=\lim_{K\rightarrow\infty}G_K(\beta)$ is stochastic for all $\beta\in \mathbb R$ and admits $v_\beta$ as invariant vector as well. Since the channel $\Phi_{K,\beta}$ is homogeneous, the stochastic matrix $G_K(\beta)$ satisfies the detailed balance relation $D^{-1}G_K(\beta)=G_K(\beta)^* D^{-1}$, where $D={\rm diag}(e^{-\beta \ve_1},\dots, e^{-\beta \ve_d})$ is independent of $K$, see \eqref{23} and Lemma \ref{lem3}. Hence, the detailed balance condition also holds for the limit stochastic matrix $G(\beta)$, which implies, as in the proof of Lemma \ref{lem3}, that the spectrum of $G(\beta)$ lies in $[-1,1]$. 

We now focus on the branch $G^-(\beta)$ for $\beta=-|\beta|<0$. As $\ve_{d-s}<\ve_{d-s+1}$ the formula \eqref{f24} gives $G^-_{j j'}(-|\beta|)>e^{|\beta|(\ve_1-\ve_{j'})}/d>0$. The Perron-Frobenius theorem then implies that $1$ is a simple eigenvalue of $G^-(-|\beta|)$ and $-1$ is not an eigenvalue. Proceeding as in the proof of Proposition \ref{gjjp}, following \eqref{yyy}, we introduce the matrix $\Gamma^-$ by $G^-(\beta)=\Gamma^- D^{-1}$ so that
\begin{align}
    \Gamma_{jj'}^-     &=  \sum^{d-1}_{s=d-\min\{j,j'\}+1}\frac{1}{s} \big(e^{|\beta|\ve_{d-s+1}}-e^{|\beta|\ve_{d-s}}\big)+ \frac1d e^{|\beta|\ve_1}.
\end{align}
By the algebraic arguments in the proof of Proposition \ref{gjjp}, the matrix $\Gamma^-$ is positive definite if and only if the spectrum of $G^-(-|\beta|)$ is strictly positive. 
Setting 
\begin{align}\label{gmoins}
    g_j^-=\begin{cases}\sum_{s=d-j+1}^{d-1} \frac{1}{s} \big(e^{|\beta|\ve_{d-s+1}}-e^{|\beta|\ve_{d-s}}\big)+\frac1d e^{|\beta|\ve_1} \ &\text{if} \ j>1 \\
    \frac1d e^{|\beta|\ve_1} \ &\text{if} \ j=1,
    \end{cases}    
\end{align}
such that $g_d^->g_{d-1}^->\dots > g_1^->0$, the matrix $\Gamma^-$ reads
\begin{align}
    \Gamma^-= \begin{pmatrix}
g_1^- & g_1^-& g_1^- & \dots & g_1^- \\
g_1^- & g_2^-  & g_2^- & \cdots & g_2^-\\
g_1^- & g_2^- & g_3^- & \cdots & g_3^- \\
\vdots & \vdots & \vdots & \ddots & \vdots\\
g_1^- & g_2^- & g_3^- & \dots & g_d^-
\end{pmatrix} = T_-P_-T_-^*, \qquad T_-=\begin{pmatrix}
1 & 0& 0 & \dots & 0 \\
1 & 1 & 0 & \cdots & 0\\
1 & 1 & 1 & \cdots & 0 \\
\vdots & \vdots & \vdots & \ddots & \vdots\\
1 & 1 & 1 & \dots & 1
\end{pmatrix},
\end{align}
and where $P_-=\text{diag}(g_1^-, g_2^--g_1^-, \dots, g_{d-1}^--g_{d-2}^-, g_d^--g_{d-1}^-)$ is a strictly positive definite matrix.
\end{proof}

\end{document}